\crefname{example}{Example}{examples}
\Crefname{example}{Example}{Examples}
\crefname{figure}{Fig.}{figures}
\Crefname{figure}{Figure}{Figures}
\crefname{section}{Sect.}{sections}
\Crefname{section}{Section}{Sections}
\crefname{table}{Tbl.}{tables}
\Crefname{table}{Table}{tables}
\crefname{theorem}{Thm.}{theorems}
\Crefname{theorem}{Theorem}{Theorems}
\definecolor{deepblue}{rgb}{0,0,0.5}
\definecolor{deepred}{rgb}{0.6,0,0}
\definecolor{deepgreen}{rgb}{0,0.5,0}
\newcommand{\ncondrule}[3]{
  \begin{array}{c}
    \textsc{ ({#1})} \\[1pt]
    #2 \vspace*{1mm}\\[2pt]
    \hline \vspace{0.5ex}\\[-8pt]
    #3
  \end{array} }
\newcommand{\s}{\ensuremath{\mathcal{S}} }
\newcommand{\smean}{{\ensuremath{\vect{\mu}}} }
\newcommand{\scov}[1]{\ensuremath{\matr{\Sigma}^{#1}} }
\newcommand{\scovprime}{\ensuremath{\matr{\Sigma}'} }
\newcommand{\scovdoubleprime}{\ensuremath{\matr{\Sigma}''} }
\newcommand{\scovtripleprime}{\ensuremath{\matr{\Sigma}'''} }
\newcommand{\sstore}{\ensuremath{\sigma} }
\newcommand \COV {\textnormal {\textrm {cov}}}
\newcommand{\lift}{\ensuremath{\mathit{lift}}}
\newcommand{\program}{\ensuremath{\pi}}
\newcommand{\supp}{\ensuremath{\textit{supp}}}
\newcommand{\rvinput}{\ensuremath{\mathit{I}}}
\newcommand{\rvoutput}{\ensuremath{\mathit{O}}}
\newcommand{\rvage}{\ensuremath{\mathit{A}}}
\newcommand{\expr}{\ensuremath{e}}
\newcommand{\dist}{\ensuremath{d}}
\newcommand{\stmt}{\ensuremath{s}}
\newcommand{\pgrm}{\ensuremath{p}}
\newcommand{\Reals}{\ensuremath{\mathbb{R}}}
\newcommand{\Inputs}{\ensuremath{\mathbb{I}}}
\newcommand{\Outputs}{\ensuremath{\mathbb{O}}}
\newcommand{\Dists}{\ensuremath{\mathcal{D}}}
\newcommand{\Evi}{\ensuremath{\mathit{E}}}
\renewcommand \vec [1]
\newcommand \vect [1]
\newcommand{\matr}[1]{\ensuremath{\mathrm{#1}}}
\newcommand{\std}{\sigma^{2}} 
\newcommand{\gauss}[2] {\mathcal{N}(#1 , #2)}
\newcommand \covgauss
\newcommand{\uniform}[1]{\mathcal{U}(#1)}
\newcommand{\ruleexpop}{\ensuremath{\text{\sc O-Exp}}}
\newcommand{\ruleexpvar}{\ensuremath{\text{\sc V-Exp}}}
\newcommand{\ruleexpconst}{\ensuremath{\text{\sc C-Exp}}}
\newcommand{\ruledetassig}{\ensuremath{\text{\sc D-Asg}}}
\newcommand{\ruleseq}{\ensuremath{\text{\sc Seq}}}
\newcommand{\ruleforb}{\ensuremath{\text{\sc For-B}}}
\newcommand{\rulefori}{\ensuremath{\text{\sc For-I}}}
\newcommand{\rulepassigind}{\ensuremath{\text{\sc P-Asg-Ind}}}
\newcommand{\rulepassigdep}{\ensuremath{\text{\sc P-Asg-Dep}}}
\newcommand{\ruleopplusminus}{\ensuremath{\text{\sc P-Op-PM}}}
\newcommand{\ruleopmuldiv}{\ensuremath{\text{\sc P-Op-MD}}}
\newcommand{\ruleoprvs}{\ensuremath{\text{\sc P-Sum}}}
\newcommand{\ruleopcond}{\ensuremath{\text{\sc P-Cond}}}
\newcommand{\rulereturn}{\ensuremath{\text{\sc Ret}}}
\begin{document}
%

\title{Exact and Efficient Bayesian Inference for Privacy Risk Quantification\thanks{Work partially supported by funding from the topic Engineering Secure Systems of the Helmholtz Association (HGF), the KASTEL Security Research Labs and the Danish Villum Foundation through Villum Experiment project No. 0002302.} \\ {\large{(Extended Version)}}}
\titlerunning{Exact and Efficient Bayesian Inference for Privacy Risk Quantification}
%
\author{Rasmus C. R{\o}nneberg \inst{1}  \and
Ra\'ul Pardo \inst{2} \and
Andrzej W\k{a}sowski \inst{2}}
\authorrunning{R. C. R{\o}nneberg et al.}
%
\institute{ Karlsruhe Institute of Technology, Germany \\ \email{rasmus.ronneberg@kit.edu}
\and
IT University of Copenhagen, Denmark \\
\email{\{raup,wasowski\}@itu.dk}
}
\maketitle              
\begin{abstract}
  Data analysis has high value both for commercial and research purposes.
  However, disclosing analysis results may pose severe privacy risk to individuals.
  Privug is a method to quantify privacy risks of data analytics programs by analyzing their source code.
  The method uses probability distributions to model attacker knowledge and Bayesian inference to update said knowledge based on observable outputs.
  Currently, Privug uses Markov Chain Monte Carlo (MCMC) to perform inference, which is a flexible but approximate solution.
  This paper presents an exact Bayesian inference engine based on multivariate Gaussian distributions to accurately and efficiently quantify privacy risks.
  The inference engine is implemented for a subset of Python programs that can be modeled as multivariate Gaussian models.
  We evaluate the method by analyzing privacy risks in programs to release public statistics.
  The evaluation shows that our method accurately and efficiently analyzes privacy risks, and outperforms existing methods.
  Furthermore, we demonstrate the use of our engine to analyze the effect of differential privacy in public statistics.
  \looseness -1
  \keywords{Privacy risk analysis \and Bayesian inference \and Probabilistic Programming}
\end{abstract}
\section{Introduction}
%
%
%

Data anonymization methods gain legal importance\,\cite{article_29_data_protection_working_party_opinion_nodate} as data collection and analysis are expanding dramatically in data management and statistical research.  Yet applying anonymization, or understanding how well a given analytics program hides sensitive information, is non-trivial\,\cite{elliot_anonymisation_2016}. Contemporary anonymization algorithms, such as  differential privacy\,\cite{dp}, require calibration to balance between reducing risks and preserving the utility of data.  To assess the risks, data scientists need to assess the flow (leakage) of information from sensitive data fields to the output of analytics.

Measuring the information leakage is a useful technique to quantify how much an attacker may learn about the sensitive information a program processes.
Many methods have been proposed in this domain~\cite{qifbook.2020,chothia.leakest.2013,chothia.leakwatch.2014,QUAIL,HyLeak,spire,cherubin.fbleau.2019,romanelli.leaves.2020,privug}.
Privug is a recent one\,\cite{privug}. It relies on Bayesian inference to quantify privacy risks in data analytics programs.
The attacker's knowledge is modeled as a probability distribution over program inputs, and it is then conditioned on the disclosed program outputs.
Then, Bayesian probabilistic programming is used to compute the posterior attacker knowledge, i.e., the updated attacker knowledge after observing the outputs in the program.
One of the advantages of Privug is that it works on the program source code, and can be extended to compute most information leakage metrics~\cite{qifbook.2020}.
However, Privug currently relies on approximate Bayesian inference such as Markov Chain Monte Carlo~\cite{mcmc}.
These methods can be used to analyze arbitrary programs, but may be computationally expensive and may produce imprecise results.
In quantifying privacy risks, precision is critical, as under-estimation of risks may result in an illegal disclosure of personal information.
\looseness -1

We present a new \emph{exact} and \emph{efficient} Bayesian inference engine for Privug targeting attackers modeled by multivariate Gaussian distributions. Even though not all standard program statements can be mapped to operations on Gaussian distributions (and thus not all programs can be analyzed using our new engine), multivariate Gaussian distributions are a good candidate for a semantic domain of attacker's knowledge for several reasons:
\begin{enumerate*}[label=\roman*)]

  \item Multivariate Gaussians are closed under many common operations and they can be computed efficiently;
  
  \item The Gaussian distribution is a \textit{maximum entropy} distribution under common conditions~\cite{mcelreath2020statistical,jaynes} that allows modeling prior attacker knowledge with minimum assumptions;

  \item Gaussians are commonly used in probabilistic modeling of engineering systems to represent uncertainty of measurement.

\end{enumerate*}

This work constitutes a new point in the study of expressiveness vs performance in quantification of privacy risks by means of Bayesian inference.
Specifically, our contributions are:%
\begin{enumerate}

  \item A probabilistic programming language for exact Bayesian inference using multivariate Gaussians. The language is a subset of Python (\cref{sec:Exact}).

  \item A definition of a sound (\cref{sec:properties}) Bayesian inference engine (\cref{sec:semantics}) using multivariate Gaussian distributions.

  \item A proof-of-concept implementation of the inference engine as a library that can be applied to analyze our subset of Python.

  \item An application of the inference engine to a case study of privacy risk quantification in public statistics (\cref{sec:case-study}), with and without differential privacy\,\cite{dp}.
    %

  \item An evaluation of the scalability of the inference engine, and a comparison with existing inference methods for privacy risk quantification (\cref{sec:eva:scalability}).
    The evaluation shows that our engine can analyze large systems involving thousands of individuals, and also that we greatly outperform existing tools for the programs supported in our language.
\end{enumerate}

\noindent The code to reproduce the evaluation and case study in this paper is available at~\cite{gauss-privug-doi}. The proof-of-concept implementation of the exact inference engine is an open source project available at \href{https://github.com/itu-square/gauss-privug}{https://github.com/itu-square/gauss-privug}.


\section{Background}\label{sec:preliminaries}

\subsection{Privug: A Data Privacy Debugging Method}
\label{sec:privug}

Let $\Inputs, \Outputs$ denote sets of \emph{inputs} and \emph{outputs}, respectively.
We use $\Dists(\Inputs)$ to denote a space of distributions; in this case over inputs.
Let $d \in \Dists(\Inputs)$ denote a distribution over inputs, $\rvinput \sim \Dists(\Inputs)$ denotes a random variable distributed according to $d$.

Privug~\cite{privug} is a method to explore information leakage on data analytics programs.
The method combines a probabilistic model of attacker knowledge with the program under analysis to quantify privacy risks.
This process is summarized in the following steps:

\paragraph{(1) Prior.}
We first model the \emph{prior} knowledge of an attacker as a distribution over program inputs.
This distribution represents the input values of a program that the attacker finds plausible.
For example, consider a program that takes as input a real number \(\rvage\) representing the age of an individual (\(\Inputs \triangleq \Reals\)).
A possible prior knowledge of the attacker could be: $\rvage \sim \uniform{0, 120}$ (all ages between $0$ and $120$ are equally likely for the attacker) or $\rvage \sim \gauss{\mu=42}{\sigma=2}$ (the attacker believes that the age of 42 and values nearby are most likely ages).
We write \(P(\rvinput)\) for the distribution of prior attacker knowledge.

\paragraph{(2) Probabilistic program interpretation.}

The second step is to interpret a target program $\program : \Inputs \to \Outputs$ using the attacker prior knowledge.
To this end, we lift the program to run on distributions $\Dists(\Inputs)$ instead of concrete inputs $\Inputs$.
This corresponds to the standard lifting to the probability monad~\cite{GordonHNR14}; $\lift : (\Inputs \to \Outputs) \to (\Dists(\Inputs) \to \Dists(\Outputs))$.
For example, consider the following program that computes the average age of a list of ages (in Python):
\begin{lstlisting}
def average_age(ages: List[float]): return sum(ages)/len(ages)
\end{lstlisting}
The lifted version of the program is (Python allows retaining the same body):
\begin{lstlisting}
def average_age(ages: Dist[List[float]] ): return sum(ages)/len(ages)
\end{lstlisting}
where \lstinline|Dist[List[float]]| denotes a distribution over lists of floats, $\Dists(\Reals^n)$.
The lifted program yields the distribution $P(\rvoutput|\rvage)$.
In general, the combination of prior attacker knowledge with the lifted program yields a joint distribution on inputs and outputs, $P(\rvoutput|\rvinput)P(\rvinput) = P(\rvoutput,\rvinput)$.

\paragraph{(3) Observations.}
It is possible to analyze privacy risks for concrete outputs of the program.
To this end, one may add observations to the probabilistic model.
In the average example above, we could check how the knowledge of the attacker changes when the attacker observes that the average is $44$.
This step yields the posterior distribution $P(\rvinput | \rvoutput = 44)$.
In general, the probabilistic lifting of the program \program\ defines a likelihood function on the joint distribution of input and output, $P(\Evi | \rvinput,\rvoutput)$ with  some predicate $\Evi$ on the joint distribution.

\paragraph{(4) Posterior Inference.}
The next step is to apply Bayesian inference to obtain a posterior distribution on the input variables.
\begin{equation}
  P(\rvinput,\rvoutput \vert \Evi) = {P(\Evi \vert \rvinput,\rvoutput)  P(\rvoutput \vert \rvinput)  P(\rvinput)} / {P(\Evi)}
\end{equation}
It is usually intractable to compute a symbolic representation
of $P(\Evi)$.
Therefore, it is not possible to get analytical solutions for the posterior distributions.
The current implementation of Privug uses Markov Chain Monte Carlo (MCMC)\,\cite{mcmc} to tackle this issue.
But MCMC methods are approximate and do not always converge.
As mentioned above, the subject of this paper is to provide an exact inference method to efficiently and precisely compute the posterior distribution.

\paragraph{(5) Posterior analysis}
We query the posterior and prior distributions (attacker knowledge) to measure how much the attacker has learned.
This can be done by applying different techniques such as computing probability queries, plotting visualizations of probability distributions, computing information theoretic metrics (e.g., entropy or mutual information), and metrics from \textit{quantitative information flow}~\cite{qifbook.2020} such as Bayes vulnerability.
\looseness-1

\begin{wrapfigure}{r}{.4\textwidth}
  \vspace{-8mm}
  \centering
  \includegraphics[width=.4\textwidth]{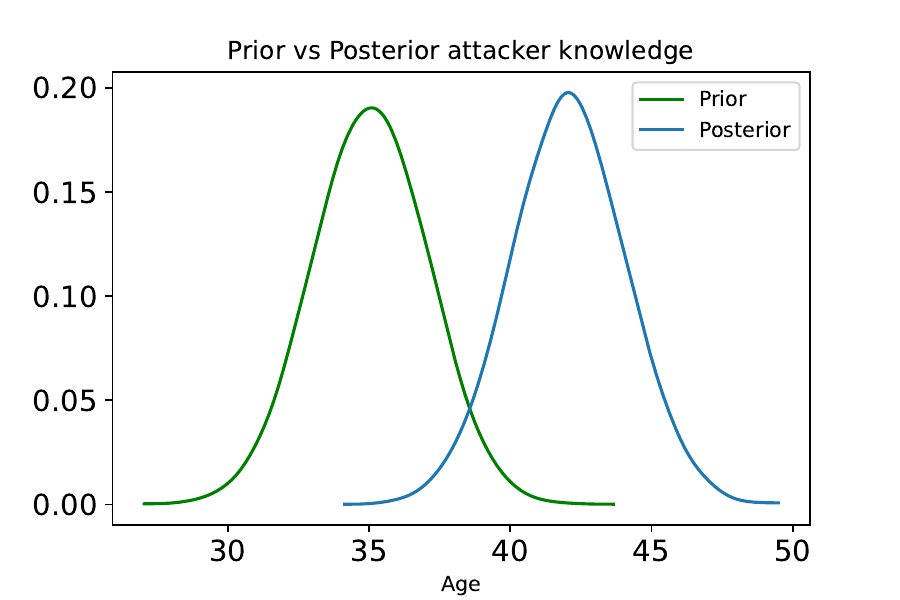}
  \caption{Prior vs Posterior ages}%
  \label{fig:privug_example}
  \vspace{-3mm}
\end{wrapfigure}

For example, \cref{fig:privug_example} compares the prior and posterior distributions of the average age program above.
We analyze the case where the output of the program is $44$.
The green line shows the prior attacker knowledge on the victim's age $P(\rvage)$ and the blue line the posterior knowledge $P(\rvage|\rvoutput=44)$ when observing that the program output is $44$.
The prior attacker knowledge is $\rvage \sim \gauss{35}{2}$ for the victim's age and also for the rest.
The figure clearly shows that the attacker now believes that higher ages are more plausible.
In other words, the attacker prior knowledge has been corrected towards more accurate knowledge on the victim's age.

\subsection{Multivariate Gaussian Distributions}%
\label{sec:ggm}
\label{sec:gaussian-background}

In this paper, we use capital Greek letters for matrices, and bold font for column vectors. Small letters \(a\) and \(b\) are reserved for selecting subvectors (as in \(\vec \mu_a\)) and pairs of them for selecting submatrices (as in \(\matr \Sigma_{ba}\)). Matrix and vector literals are written in brackets. We write  \(\supp(X)\) for the support of the random variable \(X\).
\looseness -1

\looseness -1
A \emph {multivariate Gaussian distribution}, denoted \( \vect X \sim \gauss { \vect \mu } {\matr \Sigma} \), defines a probabilistic model composed of \( n \) normally distributed random variables, \( \vec {X} = [ X_1, X_2, \ldots, X_n ] ^\intercal \). The distribution is parameterized by a vector \( \vec {\mu} \) of \( n \) means, and a symmetric \( n \times n \) \emph{covariance} matrix \( \matr \Sigma \), so \( \matr \Sigma_{ij} = \COV [X_i, X_j] \) gives the covariance between \( X_i \) and \( X_j \), \( \matr \Sigma_{kk} \) gives the variance of \( X_k \). We assume that the covariance matrix is positive definite. The probability density function is:
\begin {equation}
  P ( \vec x ) = \covgauss
  \label {eq:multigauss-pdf}
\end {equation}
where \( | \matr \Sigma | \) denotes the determinant of the matrix \( \matr \Sigma \). We recall standard properties of multivariate Gaussian distributions \cite{eaton2007multivariate,PGM2009,bishop}.
\begin{theorem}\label{thm:marginals}
  Let 
  \( 
  \begin{bmatrix}
    \vec X_a \\ \vec X_b
  \end{bmatrix}
  \sim \gauss {\vec \mu} {\matr \Sigma} 
  \) 
  with
  \( 
  \vec \mu = 
  \begin{bmatrix}
    \vec \mu_a \\ \vec \mu_b
  \end{bmatrix}
  \)
  and
  \(
    \matr \Sigma = \begin {bmatrix}
                    \matr \Sigma_{aa} & \matr \Sigma_{ab} \\
                    \matr \Sigma_{ba} & \matr \Sigma_{bb}
                  \end {bmatrix}
  \).
  The marginal distributions are
  \( \vec X_a \sim \gauss { \vec \mu_a } { \matr \Sigma_{aa} } \),
  \( \vec X_b \sim \gauss { \vec \mu_b } { \matr \Sigma_{bb} } \), and
  \( \vec X_i \sim \gauss { \vec \mu_i } { \matr \Sigma_{ii} } \)
  for \( i = 1 \dots a+b \).
\end{theorem}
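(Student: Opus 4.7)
The three claims follow from a single result: if $\vec X \sim \gauss{\vec \mu}{\matr \Sigma}$ then any sub-vector picked out by a set of coordinates is again Gaussian, with mean and covariance obtained by restricting $\vec \mu$ and $\matr \Sigma$ to those coordinates. Once this is shown for $\vec X_a$, the statement for $\vec X_b$ is symmetric, and each componentwise claim $X_i \sim \gauss{\vec \mu_i}{\matr \Sigma_{ii}}$ follows by instantiating the block $a$ to a singleton $\{i\}$.

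My preferred approach is via the moment generating function, which avoids the block-matrix inversion that arises in a direct density integration. Recall that $M_{\vec X}(\vec t) = \exp\bigl(\vec t^\intercal \vec \mu + \tfrac12 \vec t^\intercal \matr \Sigma \vec t\bigr)$ and that the MGF characterizes the distribution uniquely. I would compute
\begin{equation*}
  M_{\vec X_a}(\vec t_a) \;=\; \mathbb{E}\bigl[\exp(\vec t_a^\intercal \vec X_a)\bigr] \;=\; M_{\vec X}(\tilde{\vec t}),
  \qquad
  \tilde{\vec t} \,=\, \begin{bmatrix} \vec t_a \\ \vec 0 \end{bmatrix}.
\end{equation*}
Plugging $\tilde{\vec t}$ into $M_{\vec X}$ and using the block structure of $\vec \mu$ and $\matr \Sigma$, the exponent collapses to $\vec t_a^\intercal \vec \mu_a + \tfrac12 \vec t_a^\intercal \matr \Sigma_{aa} \vec t_a$, which is exactly the MGF of $\gauss{\vec \mu_a}{\matr \Sigma_{aa}}$. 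Uniqueness of MGFs concludes that $\vec X_a \sim \gauss{\vec \mu_a}{\matr \Sigma_{aa}}$, and the cases for $\vec X_b$ and $X_i$ are obtained analogously.

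A small auxiliary obligation is to confirm that $\matr \Sigma_{aa}$ is itself positive definite, so that the marginal fits the density form in \cref{eq:multigauss-pdf}. For any nonzero vector $\vec v$ of matching dimension, $\vec v^\intercal \matr \Sigma_{aa} \vec v = \tilde{\vec v}^\intercal \matr \Sigma \tilde{\vec v} > 0$ with $\tilde{\vec v} = [\vec v^\intercal\ \vec 0^\intercal]^\intercal$, using positive definiteness of $\matr \Sigma$.

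The main technical effort would arise only if one insists on a density-level proof: partition $\matr \Sigma^{-1}$ via the Schur complement of $\matr \Sigma_{bb}$, complete the square in $\vec x_b$, integrate $\vec x_b$ out against a Gaussian kernel, and verify that the residual quadratic form in $\vec x_a$ has exactly $\matr \Sigma_{aa}^{-1}$ as its matrix (with the right normalizing constant emerging from the determinant identity $|\matr \Sigma| = |\matr \Sigma_{bb}|\cdot|\matr \Sigma_{aa} - \matr \Sigma_{ab}\matr \Sigma_{bb}^{-1}\matr \Sigma_{ba}|$). The MGF route sidesteps all of this bookkeeping, which is why I would take it.
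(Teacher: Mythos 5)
Your argument is correct, but note that the paper does not actually prove this statement: \cref{thm:marginals} is presented as a recalled standard property of multivariate Gaussians, with the burden of proof delegated to the cited textbooks. So there is no paper proof to match against; what you have written is a self-contained justification of the cited fact. Your moment-generating-function route is one of the two canonical proofs and it is sound as sketched: the identity \( M_{\vec X_a}(\vec t_a) = M_{\vec X}(\tilde{\vec t}) \) with \( \tilde{\vec t} = [\vec t_a^\intercal\ \vec 0^\intercal]^\intercal \) makes the exponent collapse to \( \vec t_a^\intercal \vec \mu_a + \tfrac12 \vec t_a^\intercal \matr \Sigma_{aa} \vec t_a \) purely by block multiplication, and uniqueness of the MGF (which is legitimate here since the Gaussian MGF is finite everywhere) finishes the claim; the cases for \( \vec X_b \) and the single coordinates are indeed just relabelings. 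Your auxiliary check that \( \matr \Sigma_{aa} \) inherits positive definiteness from \( \matr \Sigma \) is a worthwhile addition, since the paper's \cref{eq:multigauss-pdf} assumes an invertible covariance. The alternative you correctly identify but avoid --- integrating out \( \vec x_b \) from the joint density via the Schur complement and the determinant factorization --- is the proof given in the references the paper cites; it is heavier but has the side benefit of simultaneously producing the conditional distribution of \cref{thm:conditions}, which the MGF route does not give you for free. Either way, the statement stands as you argue it.
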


\noindent \looseness -1
The covariance matrix identifies independent random variables:
\begin {theorem}\label {thm:independence}
Let \( [ X_1, \dots, X_n ] ^\intercal \sim \gauss {\vec \mu } { \matr \Sigma } \), two marginals \(X_i\), \(X_j\) with \( i \not= j \) are \emph{independent} iff \( \matr \Sigma_{ij} = \COV [X_i, X_j] = 0 \).
\end {theorem}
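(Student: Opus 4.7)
The plan is to prove the two directions separately; the ($\Rightarrow$) direction is a standard fact about any pair of independent random variables with finite second moments, while the ($\Leftarrow$) direction is where the Gaussian structure is essential.

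For the forward direction, I would observe that independence of $X_i$ and $X_j$ gives $E[X_i X_j] = E[X_i]\,E[X_j]$, hence $\COV[X_i,X_j] = E[X_iX_j] - E[X_i]E[X_j] = 0$. This does not use Gaussianity at all, only that the means and variances are finite (which holds because $\matr\Sigma$ is a well-defined positive definite matrix).

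For the backward direction, I would first invoke \cref{thm:marginals} to extract the joint distribution of the sub-vector $[X_i, X_j]^\intercal$, which is bivariate Gaussian with mean $[\mu_i, \mu_j]^\intercal$ and covariance matrix
\[
  \matr \Sigma' = \begin{bmatrix}
    \matr\Sigma_{ii} & \matr\Sigma_{ij} \\
    \matr\Sigma_{ji} & \matr\Sigma_{jj}
  \end{bmatrix}.
\]
Assuming $\matr\Sigma_{ij} = 0$, and using the symmetry of $\matr\Sigma$ to get $\matr\Sigma_{ji} = 0$ as well, the matrix $\matr\Sigma'$ is diagonal. I would then compute $|\matr\Sigma'| = \matr\Sigma_{ii}\matr\Sigma_{jj}$ and $(\matr\Sigma')^{-1} = \mathrm{diag}(\matr\Sigma_{ii}^{-1}, \matr\Sigma_{jj}^{-1})$, substitute into the density formula \eqref{eq:multigauss-pdf} for the pair, and show the quadratic form splits as a sum:
\[
  (\vec x - \vec \mu')^\intercal (\matr\Sigma')^{-1} (\vec x - \vec \mu') \;=\; \frac{(x_i-\mu_i)^2}{\matr\Sigma_{ii}} + \frac{(x_j-\mu_j)^2}{\matr\Sigma_{jj}}.
\]
Consequently, the joint density factors as the product of two univariate Gaussian densities, each matching exactly the marginal density of $X_i$ and $X_j$ given by \cref{thm:marginals}. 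Factorization of the joint density into the product of the marginals is the definition of independence for continuous random variables, completing the proof.

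The only subtle point, and the place where the Gaussian assumption is indispensable, is that zero covariance in general does not imply independence; here it does precisely because the entire dependence structure of a multivariate Gaussian is encoded in $\matr\Sigma$, so once the off-diagonal entry vanishes the density truly separates. No genuine obstacle arises beyond bookkeeping with the 2$\times$2 block.
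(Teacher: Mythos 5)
Your proof is correct, and it is the standard textbook argument. Note, however, that the paper does not prove this statement at all: \cref{thm:independence} is listed among the ``standard properties of multivariate Gaussian distributions'' recalled from the literature (Eaton, Koller--Friedman, Bishop), so there is no in-paper proof to compare against. On the merits of your argument: the forward direction (independence $\Rightarrow$ zero covariance, using $E[X_iX_j]=E[X_i]E[X_j]$) is fine and correctly identified as not needing Gaussianity; the backward direction correctly exploits that the joint density of the pair factors once the $2\times 2$ covariance block is diagonal, and you rightly flag that this is exactly where the Gaussian assumption is indispensable. The only small gap is your appeal to \cref{thm:marginals} to obtain the joint law of the pair $(X_i,X_j)$ for arbitrary indices: as stated, that theorem marginalizes a contiguous block split $[\vec X_a^\intercal, \vec X_b^\intercal]^\intercal$, so for a general pair you first need a coordinate permutation (or a selection matrix applied via \cref{thm:affine}) to conclude that $[X_i,X_j]^\intercal$ is bivariate Gaussian with covariance $\begin{bmatrix}\matr\Sigma_{ii} & \matr\Sigma_{ij}\\ \matr\Sigma_{ji} & \matr\Sigma_{jj}\end{bmatrix}$. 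That is routine bookkeeping, which you acknowledge, and it does not affect the validity of the argument.
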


\noindent \looseness -1
The space of Gaussian distributions is closed under affine transformations:
\begin {theorem} \label {thm:affine}
  Let \( \vec X \sim \gauss { \vec \mu } { \matr \Sigma } \) and \( \vec Y = \matr A \vec X + \vec b \) be an affine transformation with \( \matr A \in \Reals^{m \times n} \) a projection matrix and \( \vec b \in \Reals^{n \times 1} \) a column vector.
  Then, \( \vec Y \sim \gauss { \matr A \vec \mu + \vec b } { \matr A \matr \Sigma \matr A^\intercal } \) holds.
\end{theorem}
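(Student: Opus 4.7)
The plan is to identify $\vec Y$'s distribution by computing its moment generating function (MGF) and matching it against the MGF of the claimed Gaussian. The MGF approach is the cleanest route because $\matr A$ is rectangular ($m \times n$) and therefore not invertible in general, which rules out the direct change-of-variables in the density.

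First, I would recall the MGF of a multivariate Gaussian: if $\vec X \sim \gauss{\vec \mu}{\matr \Sigma}$, then
\begin{equation}
  M_{\vec X}(\vec t) = \mathbb{E}\bigl[\exp(\vec t^\intercal \vec X)\bigr] = \exp\!\left(\vec t^\intercal \vec \mu + \tfrac{1}{2} \vec t^\intercal \matr \Sigma \vec t \right).
\end{equation}
This can be derived from \cref{eq:multigauss-pdf} by completing the square in the exponent, and I would either cite it from \cite{eaton2007multivariate,bishop} or sketch this completion-of-squares computation as a lemma.

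Next I would compute the MGF of $\vec Y = \matr A \vec X + \vec b$ directly. Using $\vec t^\intercal (\matr A \vec X + \vec b) = (\matr A^\intercal \vec t)^\intercal \vec X + \vec t^\intercal \vec b$ and pulling the deterministic factor $\exp(\vec t^\intercal \vec b)$ out of the expectation, I get $M_{\vec Y}(\vec t) = \exp(\vec t^\intercal \vec b) \cdot M_{\vec X}(\matr A^\intercal \vec t)$. Substituting the Gaussian MGF with argument $\matr A^\intercal \vec t$ and regrouping the linear and quadratic terms yields
\begin{equation}
  M_{\vec Y}(\vec t) = \exp\!\left( \vec t^\intercal (\matr A \vec \mu + \vec b) + \tfrac{1}{2} \vec t^\intercal (\matr A \matr \Sigma \matr A^\intercal) \vec t \right).
\end{equation}
This is exactly the MGF of $\gauss{\matr A \vec \mu + \vec b}{\matr A \matr \Sigma \matr A^\intercal}$. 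Since MGFs uniquely determine distributions (in a neighborhood of the origin the Gaussian MGF is finite and analytic), the result follows.

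The only delicate point is ensuring that $\matr A \matr \Sigma \matr A^\intercal$ is a valid (i.e., positive definite) covariance matrix, as assumed throughout the paper. This is where the hypothesis that $\matr A$ is a projection matrix, interpreted as having full row rank, is needed: combined with positive definiteness of $\matr \Sigma$, it gives $\vec t^\intercal \matr A \matr \Sigma \matr A^\intercal \vec t = (\matr A^\intercal \vec t)^\intercal \matr \Sigma (\matr A^\intercal \vec t) > 0$ whenever $\vec t \neq \vec 0$, because $\matr A^\intercal$ is injective. I expect this full-rank bookkeeping to be the main subtlety; the MGF manipulation itself is routine linear algebra.
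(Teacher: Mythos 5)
The paper does not prove this statement at all: it is recalled as a standard property of multivariate Gaussians with a citation to \cite{eaton2007multivariate,PGM2009,bishop}, so there is no in-paper proof to compare against. Your MGF argument is the standard textbook derivation and is correct: the identity $M_{\vec Y}(\vec t) = \exp(\vec t^\intercal \vec b)\, M_{\vec X}(\matr A^\intercal \vec t)$ plus the Gaussian MGF formula immediately gives the claimed mean and covariance, and uniqueness of MGFs finite near the origin closes the argument. You are also right that this route is preferable to a density change of variables, since $\matr A$ is rectangular. One caveat on your ``delicate point'': you treat full row rank of $\matr A$ as the hypothesis that rescues positive definiteness of $\matr A \matr \Sigma \matr A^\intercal$, but in the paper's actual uses of this theorem (e.g.\ the proof of \cref{lemma:sum_rvs}, where $\matr A$ is $(n+1)\times n$ with the identity stacked on a selector row) $\matr A$ has more rows than columns, so $\matr A^\intercal$ is \emph{not} injective and the output covariance is genuinely singular --- the new coordinate is a deterministic function of the old ones. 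The MGF computation is unaffected, but the conclusion must then be read as a possibly degenerate Gaussian (characterized by its MGF or characteristic function rather than by the density in \cref{eq:multigauss-pdf}), which is consistent with the paper's later use of the generalized inverse in \cref{thm:conditions}. So rather than establishing positive definiteness, the honest statement is that the theorem holds for positive semidefinite output covariance; your proof needs only that one-line reinterpretation, not a new idea.
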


\noindent
We use \( Y | X_1, X_2, \dots, X_n \) to denote a random variable \( Y \) that is distributed conditionally with respect to \( X_1, X_2, \ldots, X_n \). Linear combinations of random variables can be used to define hierarchical probabilistic models consisting of dependent random variables, such as Gaussian Bayesian networks~\cite{PGM2009}.
\begin {theorem} \label {thm:linear}
  Let \( \vec X \sim \gauss { \vec \mu } { \matr \Sigma } \) and \( Y | \vec X \sim \gauss { \vec {a}^ \intercal \vec {X} + b } { \sigma^2 } \), where \( \vec a \in \Reals^{n \times 1} \) is a vector, \( b \in \Reals \) and \(\sigma^2 > 0\). Then \( [ \vec X ^ \intercal, Y ] ^ \intercal \sim \gauss { [\smean ^ \intercal, \vec a ^ \intercal \vec \mu + b ] ^ \intercal } { \scovprime } \) with
  \begin {equation*}
    \matr \Sigma' _{1..n, 1..n} \! \! = \! \matr \Sigma,
    ~
    \matr \Sigma' _{(n+1)(n+1)} = \sigma^2 + \vec a^\intercal \scov{} \vec a ,
    ~
    \matr \Sigma'_{i(n+1)} = \COV [X_i, Y] = \textstyle \sum_{j=1}^n a_j\matr \Sigma _{ij} \enspace .
  \end {equation*}
\end {theorem}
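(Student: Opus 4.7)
My plan is to reduce the claim to an application of \cref{thm:affine} by representing $Y$ as a deterministic affine function of an augmented jointly Gaussian vector. Concretely, I would introduce an auxiliary random variable $Z \sim \gauss{0}{\sigma^2}$ taken to be independent of $\vec X$. Because independent Gaussians have a product density that is itself a multivariate Gaussian density, the stacked vector $[\vec X^\intercal, Z]^\intercal$ is jointly Gaussian with mean $[\vec \mu^\intercal, 0]^\intercal$ and block-diagonal covariance $\mathrm{diag}(\matr \Sigma, \sigma^2)$. The conditional specification of $Y$ given $\vec X$ is then realized by the identity $Y = \vec a^\intercal \vec X + b + Z$, which matches the stated conditional distribution in law.

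Next I would write the target vector $[\vec X^\intercal, Y]^\intercal$ as the affine image of $[\vec X^\intercal, Z]^\intercal$ under the map $\matr A [\vec X^\intercal,Z]^\intercal + \vec c$, where
\[
\matr A = \begin{bmatrix} \matr I_n & \vec 0 \\ \vec a^\intercal & 1 \end{bmatrix},
\qquad
\vec c = \begin{bmatrix} \vec 0 \\ b \end{bmatrix}.
\]
Applying \cref{thm:affine} immediately gives that $[\vec X^\intercal, Y]^\intercal$ is multivariate Gaussian. Its mean is $\matr A [\vec \mu^\intercal, 0]^\intercal + \vec c = [\vec \mu^\intercal, \vec a^\intercal \vec \mu + b]^\intercal$, matching the statement.

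For the covariance, I would compute $\matr A\, \mathrm{diag}(\matr \Sigma, \sigma^2)\, \matr A^\intercal$ block by block. The top-left $n \times n$ block is $\matr I_n \matr \Sigma \matr I_n = \matr \Sigma$, giving $\matr \Sigma'_{1..n,1..n} = \matr \Sigma$. The bottom-right scalar block is $\vec a^\intercal \matr \Sigma \vec a + \sigma^2$, matching the stated value of $\matr \Sigma'_{(n+1)(n+1)}$. The off-diagonal block is $\matr \Sigma \vec a$, whose $i$-th entry equals $\sum_{j=1}^n a_j \matr \Sigma_{ij}$, which by symmetry of $\matr \Sigma$ is exactly $\COV[X_i, Y]$ as claimed. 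Positive-definiteness of $\matr \Sigma'$ follows from $\sigma^2 > 0$ together with positive-definiteness of $\matr \Sigma$ (the Schur complement of $\matr \Sigma$ in $\matr \Sigma'$ is $\sigma^2 > 0$).

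The only conceptually nontrivial step is justifying that appending an independent Gaussian $Z$ to $\vec X$ yields a jointly Gaussian vector; everything else is a mechanical application of \cref{thm:affine} plus routine block matrix multiplication. I would therefore state the independence-plus-product-density observation explicitly as a short lemma (or an inline remark) so that the subsequent affine-image argument is unambiguous.
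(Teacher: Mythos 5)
Your proof is correct, but note that the paper gives no proof of \cref{thm:linear} at all: it is recalled as a standard property of multivariate Gaussians with citations to the literature, so there is no in-paper argument to compare against. Your derivation is the standard one --- realize the conditional specification by writing $Y = \vec a^\intercal \vec X + b + Z$ with $Z \sim \gauss{0}{\sigma^2}$ independent of $\vec X$, observe that $[\vec X^\intercal, Z]^\intercal$ is jointly Gaussian with block-diagonal covariance, and push it forward through the affine map. The block computation of $\matr A\,\mathrm{diag}(\matr\Sigma,\sigma^2)\,\matr A^\intercal$ is right, and the Schur-complement remark correctly preserves the paper's standing assumption that covariance matrices are positive definite. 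Two small points. First, \cref{thm:affine} as stated restricts $\matr A$ to be a ``projection matrix''; your $\matr A$ (like the matrices the paper itself uses in its soundness lemmas) is not a projection in the idempotent sense, so you are implicitly invoking the general affine-closure property --- worth saying explicitly, though it is clearly what the paper intends. Second, the step you flag as the only nontrivial one also relies on the fact that the joint law of $(\vec X, Y)$ is determined by the marginal of $\vec X$ together with the conditional law of $Y$ given $\vec X$, so that your constructed pair has the same distribution as the pair in the hypothesis; this is routine, but it is the point where the realization argument actually connects back to the theorem statement, so the short lemma you propose should cover it as well.
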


\smallskip

\begin {example}
\label {ex:gaussian-bayesian-network-simple-chain}
We present an example of a Gaussian Bayesian network\,\cite {PGM2009}. Let \( X_1 \sim \gauss {50} {2} \), \( X_2 | X_1 \sim \gauss {2X_1 - 5} {1} \),  and \( X_3 | X_2 \sim \gauss {X_2 - 10} {4} \). Here  the distribution of \( X_2 \) is conditioned on \( X_1 \), and of \( X_3 \) on \( X_2 \). The model defines a joint multivariate Gaussian probability distribution \( [ X_1, X_2, X_3 ]^ \intercal \sim \gauss {\vec \mu} {\matr \Sigma} \). \Cref{thm:linear} allows to compute the mean, variance, and covariance of this joint distribution:
\looseness -1

\vspace{-2mm plus .5mm minus 1mm}

\begin{equation*}\footnotesize
  \vec \mu \! = \!
  \begin{bmatrix}
   50\\
   2 \cdot \vec \mu_1 - 5\phantom0 \\
   1 \cdot \vec \mu_2 - 10
   \end{bmatrix}
   \! = \!
   \begin{bmatrix}50 \\ 95 \\ 85\end{bmatrix} \! ,
  \quad
  \matr \Sigma \! = \!
  \begin{bmatrix}
      ~ 2 ~
      &
      \phantom{1 +} 2 \cdot \matr \Sigma_{11} \phantom{\cdot 2}
      &
      ~ 0 + 1 \cdot \matr \Sigma_{12} \phantom{\, \, \cdot 1}
    \\      
      & 1 + 2 \cdot \matr \Sigma_{11} \cdot 2
      & ~ 0 + 1 \cdot \matr \Sigma_{22} \phantom{\, \, \cdot 2}
    \\
      ~
      &
      & ~ 4 + 1 \cdot \matr \Sigma_{22} \cdot 1
  \end{bmatrix}
  \! = \!
  \begin{bmatrix}
     ~ 2 & 4  & 4   \\
         & 9 &  9  \\
         &   & 13
  \end{bmatrix}
\end{equation*}

\vspace{0mm plus 0.5mm}

\noindent
As the matrices are symmetric, we only show the upper-right part. Note that, even tough \( X_3 \) does not directly depend on \( X_1 \), it still has a non-zero covariance.
The reason for this is the indirect dependence through $X_{2}$.\qed
\end{example}

\medskip

\noindent
We use \emph{conditioning} to model observations on values of random variables.
\begin {theorem}\label {thm:conditions}
  Let \( \vec X \sim \gauss {\vec \mu} { \matr \Sigma } \) be split into two sub-vectors so that
  \begin {equation*}
  \vec X = \begin {bmatrix}
               \vec X_a \\
               \vec X_b
             \end {bmatrix},
  \enspace
  \vec \mu = \begin {bmatrix}
                 \vec \mu_a\\
                 \vec \mu_b
               \end {bmatrix},
  \enspace
  \matr \Sigma = \begin {bmatrix}
                    \matr \Sigma_{aa} & \matr \Sigma_{ab} \\
                    \matr \Sigma_{ba} & \matr \Sigma_{bb}
                  \end {bmatrix}
  \enspace
  \text {and } \vec x_b \in \supp(\vec {X}_b).
  \end {equation*}
  The conditioned distribution is \( \vec X_a | (\vec X_b = \vec {x}_b) \sim \gauss{\vec {\mu}'}{\matr{\Sigma}'}\) with \(\vec {\mu}' = \vec {\mu}_{a} + \matr{\Sigma}_{ab}\matr{\Sigma}_{bb}^-(\vec {x}_b-\vec {\mu}_{b}) \) and \( \matr \Sigma' = \matr \Sigma_{aa} - \matr \Sigma_{ab}\matr \Sigma_{bb}^- \matr \Sigma_{ba} \), where $\matr{\Sigma}^-$ is the generalized inverse.
\end{theorem}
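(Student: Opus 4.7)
The plan is to reduce the conditioning operation to a standard decorrelation trick, built on top of the affine closure (\cref{thm:affine}), the independence criterion (\cref{thm:independence}), and the marginal property (\cref{thm:marginals}) already at our disposal. The key idea is that, although $\vec X_a$ and $\vec X_b$ are generally correlated, we can construct from them a Gaussian random vector that is independent of $\vec X_b$; conditioning on $\vec X_b$ then acts only by a deterministic shift.

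First I would define the auxiliary random vector
\begin{equation*}
  \vec Y \;=\; \vec X_a \;-\; \matr \Sigma_{ab}\matr \Sigma_{bb}^{-}\vec X_b,
\end{equation*}
which, by \cref{thm:affine} applied to the joint vector $\vec X$ with the block matrix $\matr A = [\matr I,\; -\matr \Sigma_{ab}\matr \Sigma_{bb}^{-}]$, is jointly Gaussian with $\vec X_b$. I would then compute, using bilinearity of covariance and the identity $\matr \Sigma_{bb}\matr \Sigma_{bb}^{-}\matr \Sigma_{bb} = \matr \Sigma_{bb}$ (a defining property of the generalized inverse, used on the row/column space of $\matr \Sigma_{bb}$ that contains $\matr \Sigma_{ba}$), that
\begin{equation*}
  \COV[\vec Y,\vec X_b] \;=\; \matr \Sigma_{ab} - \matr \Sigma_{ab}\matr \Sigma_{bb}^{-}\matr \Sigma_{bb} \;=\; \matr 0,
\end{equation*}
so \cref{thm:independence} gives $\vec Y \perp \vec X_b$. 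The mean and covariance of $\vec Y$ follow from \cref{thm:affine}: $\mathbb E[\vec Y] = \vec \mu_a - \matr \Sigma_{ab}\matr \Sigma_{bb}^{-}\vec \mu_b$ and $\COV[\vec Y] = \matr \Sigma_{aa} - \matr \Sigma_{ab}\matr \Sigma_{bb}^{-}\matr \Sigma_{ba}$.

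Next I would write $\vec X_a = \vec Y + \matr \Sigma_{ab}\matr \Sigma_{bb}^{-}\vec X_b$ and condition both sides on $\vec X_b = \vec x_b$. Because $\vec Y$ is independent of $\vec X_b$, the conditioning leaves the distribution of $\vec Y$ unchanged, while the second summand becomes the constant $\matr \Sigma_{ab}\matr \Sigma_{bb}^{-}\vec x_b$. Applying \cref{thm:affine} one more time with the identity matrix and this constant translation yields that $\vec X_a \mid (\vec X_b = \vec x_b)$ is Gaussian with the claimed mean $\vec \mu' = \vec \mu_a + \matr \Sigma_{ab}\matr \Sigma_{bb}^{-}(\vec x_b - \vec \mu_b)$ and covariance $\matr \Sigma' = \matr \Sigma_{aa} - \matr \Sigma_{ab}\matr \Sigma_{bb}^{-}\matr \Sigma_{ba}$, which is the Schur complement of $\matr \Sigma_{bb}$ in $\matr \Sigma$.

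The main obstacle I expect is handling the generalized inverse cleanly: when $\matr \Sigma_{bb}$ is singular, the identity $\matr \Sigma_{ab}\matr \Sigma_{bb}^{-}\matr \Sigma_{bb} = \matr \Sigma_{ab}$ only holds because positive semidefiniteness of $\matr \Sigma$ forces $\mathrm{range}(\matr \Sigma_{ba}) \subseteq \mathrm{range}(\matr \Sigma_{bb})$, and the conditioning statement itself must be read with $\vec x_b \in \supp(\vec X_b)$ so that the relevant conditional density is well-defined. Once this range condition is justified (a standard consequence of positive semidefiniteness of block matrices), the rest of the argument goes through verbatim, and the positive definite case assumed earlier in the paper is obtained as an immediate special case with $\matr \Sigma_{bb}^{-} = \matr \Sigma_{bb}^{-1}$.
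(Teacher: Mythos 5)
The paper does not actually prove this theorem: it is stated in \cref{sec:gaussian-background} as a recalled standard property with citations to the literature, so there is no in-paper proof to compare against. Your argument is the standard decorrelation proof and it is correct: setting $\vec Y = \vec X_a - \matr\Sigma_{ab}\matr\Sigma_{bb}^{-}\vec X_b$, showing $\COV[\vec Y, \vec X_b] = \matr 0$, invoking zero-covariance-implies-independence for jointly Gaussian vectors, and then conditioning the identity $\vec X_a = \vec Y + \matr\Sigma_{ab}\matr\Sigma_{bb}^{-}\vec X_b$ gives exactly the stated mean and the Schur complement as covariance. Two fine points are worth tightening if you write this out in full. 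First, \cref{thm:independence} as stated in the paper concerns two scalar marginals $X_i, X_j$; you need the block version (a sub\emph{vector} $\vec Y$ independent of a subvector $\vec X_b$ when the cross-covariance block vanishes), which is true but is a strictly stronger statement than the one the paper provides, so it should be stated or derived. Second, in computing $\COV[\vec Y] = \matr\Sigma_{aa} - M\matr\Sigma_{ba} - \matr\Sigma_{ab}M^\intercal + M\matr\Sigma_{bb}M^\intercal$ with $M = \matr\Sigma_{ab}\matr\Sigma_{bb}^{-}$, the cancellation down to $\matr\Sigma_{aa} - \matr\Sigma_{ab}\matr\Sigma_{bb}^{-}\matr\Sigma_{ba}$ needs either a symmetric choice of generalized inverse (e.g.\ Moore--Penrose) or a second application of the range condition $\mathrm{range}(\matr\Sigma_{ba}) \subseteq \mathrm{range}(\matr\Sigma_{bb})$; you invoke that condition once for the cross-covariance but it is doing double duty here. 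Under the paper's blanket assumption that $\matr\Sigma$ is positive definite both points trivialize, and your observation that the generalized-inverse formulation extends to the semidefinite case is a genuine bonus, since the paper's own semantics (e.g.\ \cref{ex:bop-randomvariables-values}) produces singular covariance matrices.
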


\begin{example}
Consider the multivariate distribution of~\cref{ex:gaussian-bayesian-network-simple-chain}. We condition \( X_3 \) to be \( 85 \). By \cref{thm:conditions} the posterior of \( X_1, X_2 | X_3 = 85 \) is \( \gauss{\vec {\mu}'}{\matr \Sigma'} \) with
\setlength\arraycolsep{2.1pt}
\begin{equation*}\footnotesize
  \vec \mu' \! = \!
    \begin{bmatrix} 50 \\ 95 \end{bmatrix} \! + \!
    \begin{bmatrix} 4 \\ 9 \end{bmatrix} \begin{bmatrix} 13 \end{bmatrix}^- \!
    (85 - 85) \! = \! \begin{bmatrix} 50 \\ 95 \end{bmatrix}
  \text {and }
  \matr \Sigma' \! = \!
    \begin{bmatrix} 2 & 4 \\ 4 & 9 \end{bmatrix} - \begin{bmatrix} 4 \\ 9 \end{bmatrix} \!
    \begin{bmatrix} 13 \end{bmatrix}^- \!
    \begin{bmatrix} 4 & \! 9 \end{bmatrix} \! = \!
    \begin{bmatrix} \nicefrac {10}{13} & \nicefrac {16}{13} \\ \nicefrac {16}{13} & \nicefrac {36} {13} \end{bmatrix} \! \! \!
    \raisebox{-2ex} {\qed}
\end{equation*}
\end{example}


\section{Exact Inference Engine for Privug}
\label{sec:Exact}

\looseness -1
Our inference engine is an interpreter of a probabilistic programming language that corresponds to a subset of Python. We include variable assignments, bounded for-loops, binary operators, sequencing, \textit{probabilistic assignments}, and \textit{observations} (conditioning). Let \( v_r \in \Reals\) be real values, \( x, y, z, \ldots \) denote (deterministic) variables, $X, Y, Z, \ldots$ be (Gaussian) random variables, and \( \vec X \) a vector of random variables. Let \( \oplus \in \{+,-,*,/\} \). The syntax of well-formed programs is generated by the rule \(\pgrm\) below.
\begin{equation*}
  \begin{array}{r@{\hskip 2.5ex} r@{\hskip 1.5ex} c@{\hskip 1.5ex} l}
    \text{(Expressions)}&\expr    &::=&   v_r \mid x \mid \expr \oplus \expr \\
    \text{(Distributions)}&\dist    &::=&   \texttt{\upshape Normal}(\expr,\, \expr) \mid \texttt{\upshape Normal}(\expr * X + \expr,\, \expr) \\
    \text{(Statements)}&\stmt    &::=&   X = \dist \mid X = Y \oplus \expr \mid X = Y + Z  \mid \texttt{\upshape condition}(X,\expr) \mid \\
    &~        & ~ &   x = \expr \mid \stmt\texttt{\upshape ; } \stmt \mid \texttt{\upshape for} \; x \; \texttt{\upshape in\;range} \; v_r \; \stmt \\
    \text{(Programs)}&\pgrm    &::=&   \stmt \texttt{\upshape; return} \; \vec {X} \qedhere
  \end{array}
\end{equation*}

\noindent
We admit (\expr) constants expressions, references to deterministic program variables, and binary operations. Two ways of defining normal distributions (\dist) are supported: an independent Gaussian distribution, or a linear transformation of random variables. Statements (\stmt) are:  probabilistic assignments (a normal, a transformed distribution, a sum of two random variables), an observation (conditioning), deterministic assignment, sequencing, and a limited for-loop.  We define no expressions over random variables, only statements, to simplify introduction of changes to the state (the probabilistic model) in the semantics for each sub-expression (\cref{sec:semantics}). The for-loops are only a convenience construct for repetitive statements. A program (\pgrm) terminates returning  a random variable (\texttt{return}). The distribution of the returned variable is the marginal of the posterior joint probability distribution that we want to reason about.
\looseness -1

Although the language appears restrictive, we show in~\cref{sec:case-study} that it can be used in realistic case studies; e.g., for the study of privacy risks in database reconstruction attacks using public statistics. Furthermore, this syntax ensures soundness and termination (\cref{sec:properties}) of a highly scalable (\cref{sec:eva:scalability}) inference engine.
\looseness -1

\subsection{Semantics}
\label{sec:semantics}
The formal semantics is defined in the small-step style, over terms of multivariate Gaussian distributions (\cref{sec:gaussian-background}).
It provides a sound and efficient inference engine to track attacker knowledge in Privug~(cf.~\cref{sec:privug}).

A \emph{state} \s is a tuple $\langle \smean, \scov{}, \sstore \rangle$.
The first two elements define a multivariate Gaussian distribution \( \gauss {\smean} {\scov{}} \) over $n$ random variables.
Let $\mathcal{V}$ denote the set of deterministic variables, $\sstore : \mathcal{V} \to \Reals$ maps variables to values.
We use $\scov{}_{[X,X]}$ to denote the variance of marginal variable $X$, and $\scov{}_{[X,.]}$, $\scov{}_{[.,X]}$ to denote the covariance vectors of $X$ with other variables in the state's multivariate Gaussian distribution.
Similarly, $\scov{}_{[\vec {X},\vec {X}]}$, and $\scov{}_{[\vec {X},\vec {Y}]}$ denote the covariance matrix of the sub-vector $\vec {X}$ of a multivariate Gaussian, and the covariance matrix between sub-vectors $\vec {X},\vec {Y}$ of a multivariate Gaussian, respectively.
We use $\mu_X$ to denote the mean of $X$, and $\smean_{\vect X}$ for the mean vector of $\vect X$.
\looseness -1

\begin{figure}[p]
\begin{mathpar}
\inferrule*[left={(\ruleexpvar)}]
{
  \sstore(x) = c
}
{
  \langle x, \s \rangle \rightarrow_{e} c
}
\hspace{11mm}
\inferrule*[left={(\ruleexpop)}]
{
  \langle e_{0}, \s \rangle \rightarrow_{e} c_0 \\
  \langle e_{1}, \s \rangle \rightarrow_{e} c_1
}
{
  \langle e_{0} \oplus e_{1}, \s \rangle \rightarrow_{e} c_0 \oplus c_1
}\hfill\strut
\\
\inferrule*[left={ (\ruleexpconst)}]
{ }
{
  \langle c, \s \rangle \rightarrow_{e} c
}
\hspace{11mm}
\inferrule*[left=(\ruledetassig)]
{
  e \rightarrow_e c
}
{
  \langle x = e, \s \rangle \rightarrow_{s} \langle \vec {\mu}, \matr{\Sigma}, \sstore[ x \mapsto c] \rangle
}\hfill\strut
\\
\inferrule*[
  left=(\rulepassigind),
  right=\(c_2 > 0\)
]
{
  \langle e_i, \s \rangle \to_e  c_i \text{ for } i = 1,2
  \enspace
  \smean' = {\begin{bmatrix} \smean \\ c_1 \end{bmatrix}}
  \enspace
  \scovprime = {\begin{bmatrix}
      \scovprime & \kern-4pt \vec 0 \\
      \vec 0     & \kern-4pt c_2
  \end{bmatrix}}
}
{
  \scalebox{1.2}{\strut}
  \langle X = \texttt{Normal}(e_1, e_2),\s \rangle \to_s
  \langle \smean', \scovprime, \sstore  \rangle
}
\hfill\strut

\inferrule*[
  left=(\rulepassigdep),
  right=\(c_3 > 0\)
]
{
  \scalebox{1.2}{\strut}\langle e_i, \s \rangle \rightarrow_{e} c_i \, \text{for} \, i = 1..3
  \\
  \smean' = {\begin{bmatrix}
      \smean \\ c_1 \mu_Y + c_2
  \end{bmatrix}}\scalebox{1.5}\strut
  \quad
  \scovprime = {\begin{bmatrix}
      \scov{}            &         c_1 \scov{}_{[.,Y]} \\
      c_1  \scov{}_{[Y,.]} &  c_1^2  \scov{}_{[Y,Y]} + c_3
  \end{bmatrix}}
}
{
  \scalebox{1.2}{\strut}
  \langle X = \texttt{Normal}(e_1 * Y + e_2, e_3),\s \rangle \rightarrow_{s} \langle \smean', \scovprime, \sstore  \rangle
}
\hfill\strut

\inferrule*[
  left=(\ruleopcond),
]
{
  \langle e, \s \rangle \rightarrow_{e} c   \hspace{5pt}
  \enspace
  \smean = {\begin{bmatrix}
             \smean_a \\
             \mu_X
         \end{bmatrix}}
  \enspace
  \scov{} = {\begin{bmatrix}
              \scov{}_a        & \scov{}_{[.,X]}  \\
              \scov{}_{[X,.]}  & \scov{}_{[X,X]}
          \end{bmatrix}}
  \\
  \smean' = \smean_a + (c - \mu_X) / \scov{}_{[X,X]} \scov{}_{[.,X]}
  \quad
  \scovprime = \scov{}_a - 1\scalebox{1.5}{\strut} / \scov{}_{[X,X]} \scov{}_{[.,X]} \scov{}_{[X,.]}
  %
}
{
  \langle  \texttt{condition}(X, e),\langle \smean,\scov{},\sstore \rangle \rangle \rightarrow_{s} \langle \smean', \scovprime, \sstore \rangle
}
\hfill\strut

\vspace{-3mm}

\inferrule*[left=(\ruleseq)\!]
{
  \langle \stmt_0, \s \rangle \! \to_s \! \s''
  \enspace
  \langle \stmt_1, \s'' \rangle \! \to_s \! \s'
}
{
  \langle \texttt{$\stmt_0$; $\stmt_1$}, \s \rangle \rightarrow_{s} \s'
}
\hfill
\inferrule*[left=(\rulereturn)\!]
{
  \langle \stmt, \s \rangle \! \to_s \!
  \langle
  {\begin{bmatrix}
    \vec \mu_{\vec {X}} \\
    \vec \mu_{\vec {Y}}
  \end{bmatrix}},
  {\begin{bmatrix}
    \scov{}_{[\vec {X},\vec {X}]} \scov{}_{[\vec {X},\vec {Y}]} \\
    \scov{}_{[\vec {Y},\vec {X}]} \scov{}_{[\vec {Y},\vec {Y}]}
  \end{bmatrix}},
  \sstore
  \rangle
}
{
  \langle \stmt \texttt{;\,return} \, \vec {X}, \s \rangle \rightarrow_p (\vec {\mu}_{\vec {X}},\scov{}_{[\vec {X},\vec {X}]})
}
\\
\inferrule*[left=(\ruleforb)\!]
{
  v_r \leq 0
}
{
  \langle \texttt{for\,\(x\)\,in\,range\,$v_r$ $\stmt$}, \s \rangle \! \to_s \! \s
}%
\hfill
\inferrule*[
  left=(\rulefori)\!
]
{
  v_r > 0 \quad
  \langle s',
  \s \rangle \! \to_s \! \s'
  \\\\
  s' \! = \! \stmt; x\,\texttt{=}\, x\texttt{+}1;\scalebox{1.2}{\strut} \texttt{for}\,x\,\texttt{in\,range}\,v_r \!\!-\!\! 1 \; \stmt
}
{
  \langle \texttt{for\;\(x\)\;in\,range\;$v_r$ $\stmt$}, \s \rangle \rightarrow_{s} \s'
}
\vspace{3.5mm}
\\
\inferrule*[left=(\ruleoprvs)]
{
  \smean' \! = \! {\begin{bmatrix}
              \smean \\
              \mu_Y + \mu_Z
          \end{bmatrix}}
  \quad
  \scovprime \! = \! {\begin{bmatrix}
      \scov{}  &  \scov{}_{[.,Y]} + \scov{}_{[.,Z]} \\
      \scov{}_{[Y,.]} \! + \! \scov{}_{[Z,.]} &
      \scov{}_{[Y,Y]} \! + \! \scov{}_{[Z,Z]} \! + \! \scov{}_{[Y,Z]} \! + \! \scov{}_{[Z,Y]}
             \end{bmatrix}}
}
{
  \langle X = Y + Z, \s \rangle \rightarrow_{s} \langle \smean', \scovprime, \sstore \rangle
}
\\
\ncondrule{\footnotesize \ruleopplusminus}
{ \footnotesize
  \oplus \in \{+,-\} \quad
  \langle e, \s \rangle \to_e c \strut \\
  \smean' \! = \! \begin{bmatrix}
              \smean \\
              \mu_Y \! \oplus \! c
            \end{bmatrix}

  \enspace
  \scovprime \! = \! \begin{bmatrix}
    \scov{}         & \kern-7pt \scov{}_{[.,Y]} \\
    \scov{}_{[Y,.]} & \kern-7pt \scov{}_{[Y,Y]}
  \end{bmatrix}
}
{ \footnotesize
  \langle X = Y \oplus e, \s \rangle \rightarrow_{s} \langle \smean', \scovprime, \sstore \rangle
}
\hfill
\ncondrule{\footnotesize \ruleopmuldiv}
{ \footnotesize
  \oplus \in \{*,/\} \quad
  \langle e, \s \rangle \to_e c \strut \\
  \smean' \! = \! \begin{bmatrix}
              \smean \\
              \mu_Y \! \oplus \! c
            \end{bmatrix}
  \enspace
  \scovprime \! = \! \begin{bmatrix}
    \scov{}                  & c \! \oplus \! \scov{}_{[.,Y]}  \\
    c \! \oplus \! \scov{}_{[Y,.]\kern-7pt} & c^2 \! \oplus \! \scov{}_{[Y,Y]}
  \end{bmatrix}
}
{ \footnotesize
  \langle X = Y \! \oplus \! e, \s \rangle \rightarrow_{s} \langle \smean', \scovprime, \sstore \rangle
}
\end{mathpar}
\caption{Operational Semantics rules; \s stands for a tuple $\langle \smean, \scov{}, \sstore \rangle$.}
\label{fig:semantics-rules}
\end{figure}

\begin{definition}[Semantics]
\label{def:semantics}
The semantics is given by the relations $\to_e : \expr \times \s \to \Reals$, $\to_s : \stmt \times \s \to \s$ and $\to_p : \pgrm \times \s \to \Reals^{n \times 1} \times \Reals^{n \times n}$ for expressions $\expr$, statements $\stmt$, and programs $\pgrm$, respectively, as defined in~\cref{fig:semantics-rules}.
\end{definition}

\noindent
The rules for expressions, (deterministic) assignments, sequence of statements, and for-loops are standard, and they do not change the state's multivariate Gaussian distribution.
We omit their details.
Programs finish with a \texttt{return} instruction.
It returns the mean-vector $\vec {\mu}_a$ and covariance matrix $\scov{}_{aa}$ of the specified sub-vector of the state's multivariate Gaussian.
In what follows, we focus on the rules manipulating the state's multivariate Gaussian distribution.

There are two types of probabilistic assignments: independent and linearly dependent.
In both cases the multivariate Gaussian distribution in the state is extended with a new variable, and consequently the mean vector ($\smean$) and covariance matrix ($\scov{}$) increase their dimension.
Independent assignments (\rulepassigind) add to the mean vector the mean of the distribution.
The covariance matrix is also updated with two $\vec {0}$ vectors indicating the new variable is not correlated with existing variables, and the variable's variance is added to the diagonal of the matrix.
Dependent assignments (\rulepassigdep) add to the mean vector a mean computed as a linear combination with the mean of the dependent random variable $Y$.
The covariance matrix is extended with two vectors computed from the covariance of the dependent variable with other variables, $\scov{}_{[Y,.]}$, $\scov{}_{[.,Y]}$.
This is because the new variable depends on $Y$ and consequently on all the variables that $Y$ depends on.
The variance of the new variable is added to the diagonal of the matrix as a linear combination with the variance of $Y$.

\begin{example}
\label{ex:probabilistic-assignment}
  Consider the program $X = \texttt{Normal}(15,2) \texttt{; } Y = \texttt{Normal}(20,1) \texttt{; } Z = \texttt{Normal}(2X,1)$.
  The first assignment results in state $\smean = [15]$ and $\scov{} = [2]$.
  The second assignment updates the state into,

  \vspace{-4pt plus 1pt}

  $$\footnotesize
  \smean' = \begin{bmatrix}
              15 \\
              20
            \end{bmatrix}
  ~~~
  \scovprime = \begin{bmatrix}
                 2 & 0 \\
                 0 & 1
               \end{bmatrix}
  $$

  \noindent
  Note the zeros in the covariance coefficients as these variables are independent. Finally, the last probabilistic statement results in

  \vspace{-4pt plus 1pt}

  $$\footnotesize
  \smean'' = \begin{bmatrix}
              15 \\
              20 \\
              2 \cdot 15
            \end{bmatrix}
          = \begin{bmatrix}
              15 \\
              20 \\
              30
            \end{bmatrix}
  ~~~
  \scovdoubleprime = \begin{bmatrix}
                       2         & 0         & 2 \cdot 2  \\
                       0         & 1         & 2 \cdot 0  \\
                       2 \cdot 2 & 2 \cdot 0 & 2^2 \cdot 2 + 1
                     \end{bmatrix}
                   = \begin{bmatrix}
                       2 & 0 & 4  \\
                       0 & 1 & 0  \\
                       4 & 0 & 9
                     \end{bmatrix}
  $$

  \noindent
  Here we observe that the covariance between $Z$ and $X$ is updated with a non-zero value due to the dependency between variables, but the coefficients of $Y$, $Z$ are 0 as these variable remain independent. \qed
\end{example}

%
%
\noindent
Two rules (\ruleopplusminus) and (\ruleopmuldiv) define binary operations between random variables and values.
These rules always produce a random variable that is added to the state's multivariate Gaussian.
This is why the statement is combined with an assignment.
\looseness -1

When a value is added/subtracted to a random variable (\ruleopplusminus), a new random variable is added with its mean updated accordingly.
The new variable inherits the variance an covariances of $Y$.
For multiplication and division (\ruleopmuldiv), the mean is updated as before, but also the variance of the random variable, and the covariances with the dependent random variables.

\begin{example}
\label{ex:bop-randomvariables-values}

Consider the program $X = \texttt{Normal}(1,1) \texttt{; } Y = X + 2 \texttt{; } Z = Y * 2$.
After the first statement we have the state $\smean = [1]$ and $\scov{} = [1]$.
The second statement updates the state such that,

$$\footnotesize
  \smean' = \begin{bmatrix}
              1 \\
              3
            \end{bmatrix}
  ~~~
  \scovprime = \begin{bmatrix}
                 1 & 1 \\
                 1 & 1
               \end{bmatrix}
$$

\noindent
The last statements updates the state into
$$\footnotesize
  \smean'' = \begin{bmatrix}
               1 \\
              3 \\
              6
            \end{bmatrix}
  ~~~
  \scovdoubleprime = \begin{bmatrix}
                       1         & 1         & 2 \cdot 1\\
                       1         & 1         & 2 \cdot 1 \\
                       2 \cdot 1 & 2 \cdot 1 & 2^{2} \cdot 1 \\
               \end{bmatrix}
                = \begin{bmatrix}
                      1 & 1 & 2 \\
                      1 & 1 & 2 \\
                      2 & 2 & 4
                    \end{bmatrix}
$$
\end{example}

%
\paragraph{Sum of random variables.}
The sum of two Gaussian random variables (\ruleoprvs) adds a new random variable to the multivariate Gaussian.
The mean of the new random variable is the sum of the means of the operands.
The covariance of the resulting random variable is the sum of the covariances of the operands with other variables, i.e., the new variable depends on all the variables that the operands depend on.
The variance is the sum of the variances of the operands, and the covariances of the operands.
%

\begin{example}
\label{ex:sum-random-variables}
  Consider the program $X = \texttt{Normal}(15,2) \texttt{; } Y = \texttt{Normal}(2,1) \texttt{; } Z = X + Y$.
  After the second assignment we have

  $$\footnotesize
  \smean' = \begin{bmatrix}
              15 \\
              2
            \end{bmatrix}
  ~~~
  \scovprime = \begin{bmatrix}
                 2 & 0 \\
                 0 & 1
               \end{bmatrix}
  $$

  \noindent
  Thus, the third assignment updates the state's multivariate Gaussian into
  $$\footnotesize
  \smean'' = \begin{bmatrix}
               15     \\
               2      \\
               15 + 2
             \end{bmatrix}
           = \begin{bmatrix}
               15 \\
               2  \\
               17
             \end{bmatrix}
 ~~~
 \scovdoubleprime = \begin{bmatrix}
                      2     & 0     & 2 + 0         \\
                      0     & 1     & 0 + 1         \\
                      2 + 0 & 0 + 1 & 2 + 1 + 0 + 0
                    \end{bmatrix}
                  = \begin{bmatrix}
                      2 & 0 & 2 \\
                      0 & 1 & 1 \\
                      2 & 1 & 3
                    \end{bmatrix}
  $$
\end{example}

\paragraph{Conditions.}
For conditioning (\ruleopcond), we use \cref{thm:conditions} introduced in~\cref{sec:gaussian-background}.
As a result of conditioning, the observed variable is removed from the mean vector and covariance matrix.
Note that, despite \ruleopcond\ applying to the newest random variable, we may perform an affine transformation using a permutation matrix that swaps the order of random variables.
Thus, conditions may refer to any variable in the multivariate Gaussian.

\begin{example}
  Let $\smean''$, and \scovdoubleprime\ be those in the final state of the program in~\cref{ex:sum-random-variables}.
  Suppose that we extend the program with the statement $\texttt{condition}(Z, 1)$.
  The resulting multivariate Gaussian is updated as

  $$\footnotesize
  \smean''' = \begin{bmatrix} 15 \\ 2 \end{bmatrix} + \frac{1-17}{3} \cdot \begin{bmatrix} 2 \\ 1 \end{bmatrix}
            = \begin{bmatrix} 15 \\ 2 \end{bmatrix} + \begin{bmatrix} -32/3 \\ -16/3 \end{bmatrix}
            = \begin{bmatrix} 13/3 \\ -10/3 \end{bmatrix}
  $$
  $$\footnotesize
  \scovtripleprime = \begin{bmatrix} 2 & 0 \\ 0 & 1 \end{bmatrix} - \frac{1}{3} \cdot \begin{bmatrix} 2 \\ 1 \end{bmatrix} \cdot \begin{bmatrix} 2 & 1 \end{bmatrix}
                   = \begin{bmatrix} 2 & 0 \\ 0 & 1 \end{bmatrix} - \frac{1}{3} \cdot \begin{bmatrix} 4 & 2 \\ 2 & 1 \end{bmatrix}
                   = \begin{bmatrix} 8/3 & -2/3 \\ -2/3 & 2/3 \end{bmatrix}
  $$
\end{example}

\noindent
Recall that covariances may be negative as the covariance matrix is positive definite (cf.~\cref{sec:gaussian-background}).

\subsection{Soundness and Termination}
\label{sec:properties}

In what follows, we show that the semantics rules in~\cref{fig:semantics-rules} are \emph{sound}, and that the inference engine always \emph{terminates} for well-formed programs.

We establish \emph{soundness} of our engine by ensuring that all program statements perform a closed-form transformation on the state's multivariate Gaussian distribution.
Lemmas (\ref{lemma:sum_rvs}-\ref{lemma:cond}) assert the soundness of each of the rules in $\to_s$ (cf.~\cref{fig:semantics-rules}).
For example, below we show the proof of the lemma for sum of random variables (i.e., \ruleoprvs) whose soundness is based on the affine transformation property of multivariate Gaussian distributions (cf.~\cref{thm:affine}).
The lemma asserts that the distribution resulting from executing the program statement is a well-formed multivariate Gaussian and also that the newly introduced variable is distributed as the sum of the operands.
We refer interested readers to~\cref{sec:proofs_lemmas} for the proofs of the remaining lemmas.
Again, we omit the soundness details of deterministic statements and expressions as they are standard.
The soundness of the \rulereturn\ rule follows from lemmas~(\ref{lemma:sum_rvs}-\ref{lemma:cond}) and~\cref{thm:marginals}.

\begin{lemma}[Sum random vars.]%
  \label{lemma:sum_rvs}
  Let $[X_1, X_2, \ldots]^\intercal \sim \gauss{\smean}{\scov{}}$.
  For all states $\s = \langle \smean, \scov{}, \sstore \rangle$, if $\langle Y =  X_i + X_j, \s \rangle \to_s \langle \smean', \scovprime, \sstore \rangle$ and $[X_1,X_2,\ldots,X_i + X_j]^\intercal \sim \gauss{\smean''}{\scovdoubleprime}$, then $[X_1,X_2,\ldots,Y]^\intercal \sim \gauss{\smean'}{\scovprime}$ and $\gauss{\smean'}{\scovprime} = \gauss{\smean''}{\scovdoubleprime}$.
\end{lemma}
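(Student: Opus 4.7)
\medskip
\noindent\textbf{Proof plan for \cref{lemma:sum_rvs}.}
My plan is to reduce the claim to a direct application of the affine transformation property of multivariate Gaussians (\cref{thm:affine}). The key observation is that the operation $Y = X_i + X_j$ can be realised as a single affine map on the vector $\vec X = [X_1, \dots, X_n]^\intercal$ that \emph{appends} a new coordinate. Concretely, I would define the $(n{+}1) \times n$ matrix
\[
  \matr A \;=\; \begin{bmatrix} \matr I_n \\ \vec e_i^\intercal + \vec e_j^\intercal \end{bmatrix},
\]
where $\vec e_k$ is the $k$-th standard basis vector in $\Reals^n$, and take $\vec b = \vec 0 \in \Reals^{n+1}$. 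Then by construction $\matr A \vec X + \vec b = [X_1, \dots, X_n, X_i + X_j]^\intercal$, i.e., exactly the random vector whose distribution is $\gauss{\smean''}{\scovdoubleprime}$ in the statement of the lemma.

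Applying \cref{thm:affine} immediately yields $[X_1,\dots,X_n,X_i+X_j]^\intercal \sim \gauss{\matr A \smean}{\matr A \scov{} \matr A^\intercal}$, so both the mean vector and covariance matrix of $\gauss{\smean''}{\scovdoubleprime}$ are given in closed form. It remains to check that $\matr A \smean$ coincides with the $\smean'$ produced by rule \ruleoprvs, and analogously for $\matr A \scov{} \matr A^\intercal$ vs.\ $\scovprime$. The mean part is immediate: the first $n$ rows of $\matr A$ are the identity, so the first $n$ entries of $\matr A \smean$ are $\smean$ unchanged, while the last entry is $\mu_{X_i} + \mu_{X_j}$, which matches $\mu_Y + \mu_Z$ in the rule (with the obvious renaming $Y := X_i$, $Z := X_j$).

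For the covariance part, I would carry out the block multiplication $\matr A \scov{} \matr A^\intercal$ entry by entry. The upper-left $n \times n$ block is $\matr I_n \scov{} \matr I_n^\intercal = \scov{}$; the upper-right $n \times 1$ block is $\scov{}(\vec e_i + \vec e_j) = \scov{}_{[.,X_i]} + \scov{}_{[.,X_j]}$ (i.e., $\scov{}_{[.,Y]} + \scov{}_{[.,Z]}$); the lower-left block is its transpose; and the lone bottom-right entry is
\[
  (\vec e_i + \vec e_j)^\intercal \scov{}\, (\vec e_i + \vec e_j)
  \;=\; \scov{}_{[X_i,X_i]} + \scov{}_{[X_j,X_j]} + \scov{}_{[X_i,X_j]} + \scov{}_{[X_j,X_i]},
\]
which matches the diagonal entry in $\scovprime$ of the rule. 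This establishes $\gauss{\smean'}{\scovprime} = \gauss{\smean''}{\scovdoubleprime}$ and simultaneously shows $[X_1,\dots,X_n,Y]^\intercal \sim \gauss{\smean'}{\scovprime}$, closing the lemma.

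The only subtlety I anticipate is bookkeeping: \cref{thm:affine} is stated for a projection matrix $\matr A$, so I must either argue that the stated result in fact holds for arbitrary $\matr A \in \Reals^{m \times n}$ (which is standard and used implicitly elsewhere in the paper), or interpret $\matr A$ here as selecting the $n$ original coordinates and a linear combination of two of them, which is still of the required form. No genuine computation is hard; the main obstacle is merely carefully matching indices so that the blockwise expansion of $\matr A \scov{} \matr A^\intercal$ is recognisably the $\scovprime$ written in rule \ruleoprvs.
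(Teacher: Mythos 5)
Your proposal is correct and follows essentially the same route as the paper's own proof: both construct the $(n{+}1)\times n$ matrix whose top block is $\matr I_n$ and whose last row sums the $i$-th and $j$-th coordinates, apply \cref{thm:affine} with $\vec b = \vec 0$, and match the resulting blocks of $\matr A \smean$ and $\matr A \scov{} \matr A^\intercal$ against rule \ruleoprvs. Your remark about \cref{thm:affine} being stated for a ``projection matrix'' is a fair observation of a looseness the paper also glosses over, but it does not change the argument.
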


\begin{proof}
  Let $\matr{A}$ be a $m \times n$ projection matrix where $\matr{A}_n = I_n$ where $I_n$ is a $n \times n$ identity matrix, $\matr{A}_{n+1}[i] = \matr{A}_{n+1}[j] = 1$ and $\matr{A}_{n+1}[k] = 0$ for $i,j \not= k$.
  Let $\vec {b} = \vec {0}$.
  Then, by matrix multiplication $[X_1, X_2, \ldots, X_i + X_j]^\intercal = \matr{A}\vec {X} + \vec {b}$.
  By~\cref{thm:affine}, $[X_1, X_2, \ldots, X_i + X_j]^\intercal \sim \gauss{\matr{A}\vect{\mu} + \vect{b}}{\matr{A}\scov{}\matr{A}^\intercal}$.
  By matrix multiplication
  \begin{equation}
    \begin{split}
    \matr{A}\vec {\mu} + \vec {b} &=
    \begin{bmatrix}
      \vec {\mu} \\
      \mu_{X_i} + \mu_{X_j}
    \end{bmatrix}
    \\
    \matr{A}\scov{}\matr{A}^\intercal &=
    \begin{bmatrix}
      \scov{} & \scov{}_{[.,X_i]} + \scov{}_{[.,X_j]} \\
      \scov{}_{[X_i,.]} + \scov{}_{[X_j,.]} & \scov{}_{[X_i,X_i]} + \scov{}_{[X_j,X_j]} + \scov{}_{[X_i,X_j]} + \scov{}_{[X_j,X_i]}
    \end{bmatrix}
    \end{split}
    \label{eq:affine_sum_rvs_main}
  \end{equation}
  By~\cref{def:semantics} and \cref{eq:affine_sum_rvs_main} we have that $\smean' = \matr{A}\vec {\mu} + \vec {b}$ and $\scovprime = \matr{A}\scov{}\matr{A}^\intercal$.
  Thus, $[X_1, X_2, \ldots, Y]^\intercal \sim \gauss{\smean'}{\scovprime}$ and $\gauss{\smean'}{\scovprime} = \gauss{\matr{A}\vect{\mu} + \vect{b}}{\matr{A}\scov{}\matr{A}^\intercal}$.
\end{proof}

\begin{lemma}[Independent assignment]
  Let $[X_1, X_2, \ldots]^\intercal \sim \gauss{\smean}{\scov{}}$. For all states $\s = \langle \smean, \scov{}, \sstore \rangle$, if $\langle Y = \texttt{\upshape Normal}(e_1, e_2), \s \rangle \to_s \langle \smean', \scovprime, \sstore \rangle$ and $\langle \s, e_i \rangle \to_e c_i$, then $[X_1,X_2, \ldots, Y]^\intercal \sim \gauss{\smean'}{\scovprime}$ and $Y \sim \gauss{c_1}{c_2}$ and $Y, X_i$ are independent.
  \label{lemma:ind_asg}
\end{lemma}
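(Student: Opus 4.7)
The plan is to mimic the structure of the proof of \cref{lemma:sum_rvs}, reducing the soundness claim to one of the standard properties of multivariate Gaussians recalled in \cref{sec:gaussian-background}. First, I would unfold the rule \rulepassigind\ to obtain explicitly that the resulting state has mean vector $\smean' = [\smean^\intercal, c_1]^\intercal$ and covariance matrix $\scovprime$ which is block diagonal with blocks $\scov{}$ and $c_2$, with $c_2 > 0$ enforced by the side condition of the rule. The soundness goal then decomposes into three sub-claims: (i) the full joint $[X_1, \ldots, X_n, Y]^\intercal$ is distributed as $\gauss{\smean'}{\scovprime}$; (ii) the marginal $Y$ has distribution $\gauss{c_1}{c_2}$; and (iii) $Y$ is independent of each pre-existing $X_i$.

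The core step is claim (i), for which I would invoke \cref{thm:linear} with the degenerate choice $\vec{a} = \vec{0}$, $b = c_1$, and $\sigma^2 = c_2$. Under these parameters the conditional distribution $Y \mid \vec{X} \sim \gauss{\vec{a}^\intercal \vec{X} + b}{\sigma^2}$ collapses to $Y \sim \gauss{c_1}{c_2}$, matching what \rulepassigind\ introduces. The formulas from \cref{thm:linear} then yield joint mean $[\smean^\intercal,\, \vec{0}^\intercal \smean + c_1]^\intercal = [\smean^\intercal, c_1]^\intercal$, new diagonal entry $c_2 + \vec{0}^\intercal \scov{} \vec{0} = c_2$, and cross-covariances $\scovprime_{i(n+1)} = \sum_{j} 0 \cdot \scov{}_{ij} = 0$ for every $i$. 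These coincide exactly with the $\smean'$ and $\scovprime$ produced by the operational semantics, establishing $[X_1, \ldots, X_n, Y]^\intercal \sim \gauss{\smean'}{\scovprime}$. Claim (ii) then follows immediately from \cref{thm:marginals} applied to the last component, and claim (iii) follows from \cref{thm:independence} applied to the zero off-diagonal entries $\scovprime_{i(n+1)} = 0$ just computed.

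The main obstacle is really only conceptual: justifying that the degenerate linear combination with coefficient vector $\vec{a} = \vec{0}$ falls within the hypotheses of \cref{thm:linear}. The hypothesis $\sigma^2 > 0$ holds thanks to the side condition $c_2 > 0$ of \rulepassigind, and nothing in \cref{thm:linear} requires $\vec{a}$ to be nonzero, so the step is sound. If one preferred to avoid this degenerate invocation, an alternative would be to follow the style of \cref{lemma:sum_rvs} verbatim: take a fresh auxiliary $Z \sim \gauss{0}{1}$ independent of $\vec{X}$, augment the joint to $[\vec{X}^\intercal, Z]^\intercal \sim \gauss{[\smean^\intercal, 0]^\intercal}{\mathrm{diag}(\scov{}, 1)}$, and then define the affine transformation $\matr{A},\vec{b}$ that maps this to $[\vec{X}^\intercal, \sqrt{c_2}\,Z + c_1]^\intercal$, applying \cref{thm:affine} to recover precisely $\smean'$ and $\scovprime$. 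Either route closes the proof in a few lines.
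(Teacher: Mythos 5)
Your proposal is correct, but it takes a genuinely different route from the paper. The paper's proof works backwards from the target distribution: it posits $[X_1,\ldots,Y]^\intercal \sim \gauss{\smean''}{\scovdoubleprime}$ with $Y \sim \gauss{c_1}{c_2}$ independent of the $X_i$, then uses \cref{thm:independence} to force the cross-covariance blocks to be $\vect{0}$ and \cref{thm:marginals} to force $\mu''_Y = c_1$, $\scovdoubleprime_{[Y,Y]} = c_2$ and the remaining blocks to equal $\smean$ and $\scov{}$, and finally observes that these parameters coincide with the $\smean'$, $\scovprime$ produced by \rulepassigind. You instead construct the joint forwards, invoking \cref{thm:linear} with the degenerate coefficients $\vec{a} = \vec{0}$, $b = c_1$, $\sigma^2 = c_2$, and then recover the marginal of $Y$ and the independence as consequences via \cref{thm:marginals} and \cref{thm:independence}. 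Your degenerate invocation is legitimate: \cref{thm:linear} only requires $\sigma^2 > 0$, which the side condition $c_2 > 0$ of \rulepassigind\ supplies, and nothing excludes $\vec{a} = \vec{0}$. What your route buys is uniformity --- it exhibits the independent assignment as literally the zero-coefficient special case of the dependent assignment, so the proof becomes a corollary of the same argument used for \cref{lemma:cond_asg}; your fallback via an auxiliary standard normal and \cref{thm:affine} likewise mirrors \cref{lemma:sum_rvs}. What the paper's route buys is brevity and the avoidance of any degenerate instantiation: it never needs to interpret a fresh draw as a conditional distribution that happens not to depend on its conditioning variables, since it takes the independence as the defining property and lets \cref{thm:independence} do the work. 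Both arguments close the same gap, namely matching the intended distribution's parameters against those computed by the operational rule.
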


\begin{lemma}[Dependent assignments]%
  \label{lemma:cond_asg}
  Let $[X_1, X_2, \ldots]^\intercal \sim \gauss{\smean}{\scov{}}$. For all states $\s = \langle \smean, \scov{}, \sstore \rangle$, if $\langle Y = \texttt{\upshape Normal}(e_1*X_i + e_2, e_3), \s \rangle \to_s \langle \smean', \scovprime, \sstore \rangle$ and $\langle \s, e_i \rangle \to_e c_i$, then $[X_1, X_2, \ldots, Y]^\intercal \sim \gauss{\smean'}{\scovprime}$ and $Y \mid X_i \sim \gauss{c_1X_i + c_2}{c_3}$.
\end{lemma}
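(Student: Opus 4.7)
The plan is to mirror the proof of \cref{lemma:sum_rvs}, but to invoke \cref{thm:linear} (linear--Gaussian conditioning) instead of \cref{thm:affine}, since a dependent probabilistic assignment $Y = \texttt{Normal}(e_1 * X_i + e_2, e_3)$ is exactly a linear--Gaussian extension of the current joint by one new variable. The work reduces to matching, on the one hand, the parameters produced by \rulepassigdep, and on the other hand, the explicit formulas given in \cref{thm:linear}.

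First I would unfold the semantics. From \rulepassigdep\ I read off that $\langle e_j, \s \rangle \to_e c_j$ for $j = 1,2,3$, and that $\smean' = [\smean^\intercal,\, c_1 \mu_{X_i} + c_2]^\intercal$ while $\scovprime$ agrees with $\scov{}$ in the top--left block and has new off--diagonal column $c_1 \scov{}_{[.,X_i]}$ and new diagonal entry $c_1^2 \scov{}_{[X_i,X_i]} + c_3$. Next I would set up the parameters for \cref{thm:linear}: let $\vec a \in \Reals^{n \times 1}$ be the column vector whose $i$-th entry is $c_1$ and whose other entries are $0$, and set $b = c_2$, $\sigma^2 = c_3$. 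Then $\vec a^\intercal \vec X = c_1 X_i$, so the semantic prescription $Y \mid X_i \sim \gauss{c_1 X_i + c_2}{c_3}$ is equivalent to the hypothesis $Y \mid \vec X \sim \gauss{\vec a^\intercal \vec X + b}{\sigma^2}$ required by \cref{thm:linear}.

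With this instantiation, \cref{thm:linear} gives $[\vec X^\intercal, Y]^\intercal \sim \gauss{\smean^{\star}}{\scov{\star}}$, where $\smean^{\star} = [\smean^\intercal,\, \vec a^\intercal \smean + b]^\intercal$, the top--left block of $\scov{\star}$ is $\scov{}$, the bottom--right corner is $\sigma^2 + \vec a^\intercal \scov{} \vec a$, and the last column has entries $\COV[X_k, Y] = \sum_{j} a_j \scov{}_{kj}$. Exploiting that $\vec a$ has a single nonzero entry $c_1$ at position $i$, a direct calculation collapses these expressions to $c_1 \mu_{X_i} + c_2$, $c_3 + c_1^2 \scov{}_{[X_i,X_i]}$, and $c_1 \scov{}_{[.,X_i]}$ respectively, matching $\smean'$ and $\scovprime$ entry by entry. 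Hence $[X_1, X_2, \ldots, Y]^\intercal \sim \gauss{\smean'}{\scovprime}$, and the second conclusion $Y \mid X_i \sim \gauss{c_1 X_i + c_2}{c_3}$ is immediate from the semantic definition of the statement together with $\langle e_j, \s \rangle \to_e c_j$.

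The main obstacle I anticipate is the (otherwise invisible) justification that $Y \mid X_i \sim \gauss{c_1 X_i + c_2}{c_3}$ upgrades to the stronger $Y \mid \vec X \sim \gauss{\vec a^\intercal \vec X + b}{\sigma^2}$ needed by \cref{thm:linear}. This requires invoking the standard convention (implicit in the operational semantics) that each probabilistic assignment introduces fresh Gaussian noise independent of the state, which delivers the conditional independence of $Y$ from the remaining components of $\vec X$ given $X_i$. Once this reduction is granted, what remains is a routine indexing calculation analogous to the one in the proof of \cref{lemma:sum_rvs}.
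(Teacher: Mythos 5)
Your proof is correct and follows essentially the same route as the paper's: both instantiate \cref{thm:linear} with the sparse vector $\vec a$ having $c_1$ in position $i$ and zeros elsewhere, $b = c_2$, $\sigma^2 = c_3$, and then match the resulting mean, variance, and covariance entries against those produced by \rulepassigdep. Your closing remark about upgrading $Y \mid X_i$ to $Y \mid \vec X$ via conditional independence given $X_i$ is a point the paper leaves implicit, but it does not change the argument.
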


\begin{lemma}[Binary operations with values]
  Let $[X_1, X_2, \ldots]^\intercal \sim \gauss{\smean}{\scov{}}$ and $\oplus \in \{+,-,*,/\}$.
  For all states $\s = \langle \smean, \scov{}, \sstore \rangle$, if $\langle Y = X_i \, \oplus \, e, \s \rangle \rangle \to_s \langle \smean', \scovprime, \sstore \rangle$ and $\langle \s, e \rangle \to_e c$ and $[X_1, X_2, \ldots, X_i \oplus c]^\intercal \sim \gauss{\smean''}{\scovdoubleprime}$, then $[X_1,$ $ X_2, \ldots, Y]^\intercal \sim \gauss{\smean'}{\scovprime}$ and $\gauss{\smean'}{\scovprime} = \gauss{\smean''}{\scovdoubleprime}$.
\end{lemma}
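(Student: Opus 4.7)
The plan is to mirror the proof of \cref{lemma:sum_rvs} by splitting on the operator and, in each case, exhibiting an affine transformation that realises the statement $Y = X_i \oplus c$ so that \cref{thm:affine} can be applied. Concretely, I would first evaluate $e$ to $c$ using the premise $\langle \s, e \rangle \to_e c$, and then construct a projection matrix $\matr A \in \Reals^{(n+1)\times n}$ whose top $n$ rows form the identity $I_n$ (so that the existing marginals are preserved) and whose last row and the vector $\vec b \in \Reals^{(n+1)\times 1}$ encode the operation on $X_i$.

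For the additive case $\oplus \in \{+,-\}$, I would take the last row of $\matr A$ to be the $i$-th standard basis vector $\vec e_i^\intercal$ and set $\vec b = [0,\dots,0,\pm c]^\intercal$, with the sign matching $\oplus$. Then $\matr A \vec X + \vec b = [X_1,\dots,X_n, X_i \oplus c]^\intercal$, and \cref{thm:affine} gives that this vector is distributed as $\gauss{\matr A \vec \mu + \vec b}{\matr A \scov{} \matr A^\intercal}$. A short matrix computation shows that $\matr A \vec \mu + \vec b$ simply appends $\mu_{X_i} \oplus c$ to $\vec \mu$, while $\matr A \scov{} \matr A^\intercal$ appends the column $\scov{}_{[.,X_i]}$, the row $\scov{}_{[X_i,.]}$, and the diagonal entry $\scov{}_{[X_i,X_i]}$. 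These match exactly the $\smean'$ and $\scovprime$ produced by rule \ruleopplusminus.

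For the multiplicative case $\oplus \in \{*,/\}$, I would take $\vec b = \vec 0$ and the last row of $\matr A$ to be $c \cdot \vec e_i^\intercal$ for $*$ and $(1/c)\cdot \vec e_i^\intercal$ for $/$ (the latter implicitly requiring $c \neq 0$, which is enforced syntactically on admissible programs). Then $\matr A \vec X = [X_1,\dots,X_n, X_i \oplus c]^\intercal$, and \cref{thm:affine} yields the distribution $\gauss{\matr A \vec \mu}{\matr A \scov{} \matr A^\intercal}$. Here the appended mean is $\mu_{X_i} \oplus c$, the appended off-diagonal entries are $c \oplus \scov{}_{[.,X_i]}$ and $c \oplus \scov{}_{[X_i,.]}$, and the new diagonal entry is $c^2 \oplus \scov{}_{[X_i,X_i]}$, matching rule \ruleopmuldiv. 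Combining both cases, $[X_1,\dots,X_n, X_i \oplus c]^\intercal \sim \gauss{\smean''}{\scovdoubleprime}$ with $\smean'' = \smean'$ and $\scovdoubleprime = \scovprime$. Since $Y$ denotes the fresh variable produced by the rule, renaming gives $[X_1,\dots,X_n,Y]^\intercal \sim \gauss{\smean'}{\scovprime}$ and the claimed equality of distributions.

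I do not expect a serious obstacle: the argument is essentially bookkeeping over four symmetric subcases, and the only non-routine point is ensuring the projection matrix for $/$ is well-defined, which is handled by the implicit constraint $c \neq 0$. The matrix-multiplication verification, while tedious, is entirely analogous to the one carried out explicitly in \cref{lemma:sum_rvs}, and can be presented compactly by treating $\oplus$ uniformly as an elementwise operation on the relevant row/column of $\matr A$.
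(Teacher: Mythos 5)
Your proof is correct and follows essentially the same route as the paper's: the same case split on $\oplus$, the same construction of an $(n{+}1)\times n$ projection matrix $\matr A$ and offset vector $\vec b$, and the same appeal to \cref{thm:affine} followed by the matrix-multiplication bookkeeping and \cref{def:semantics}. If anything, your placement of the additive constant in the \emph{last} entry of $\vec b$ (rather than the $i$-th, as the paper writes) and your explicit note that $c \neq 0$ is needed for division are slightly more careful than the published version.
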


\begin{lemma}[Conditioning]
  Let $[\vect{X}^\intercal_a, Y]^\intercal \sim \gauss{\smean}{\scov{}}$.
  For all states $\s = \langle \smean, \scov{}, $ $\sstore \rangle$, if $\langle \s, e \rangle \to_e c$ and $\langle \texttt{\upshape condition}(Y, c), \s \rangle \to_s \langle \smean', \scovprime, \sstore \rangle$, then $\vec {X}' \sim \gauss{\smean'}{\scovprime}$ and $\vec {X}' = \vec {X}_a \mid Y = c$.
  \label{lemma:cond}
\end{lemma}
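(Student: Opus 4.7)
The plan is to show that the update performed by rule \ruleopcond{} coincides exactly with the formulas from \cref{thm:conditions} when the conditioned-on sub-vector is the scalar $Y$. First I would unfold the semantics rule to expose the explicit form
\[
  \smean' = \smean_a + \frac{c - \mu_Y}{\scov{}_{[Y,Y]}} \, \scov{}_{[.,Y]}, \qquad
  \scovprime = \scov{}_a - \frac{1}{\scov{}_{[Y,Y]}} \, \scov{}_{[.,Y]} \scov{}_{[Y,.]} ,
\]
using the evaluation premise $\langle e, \s\rangle \to_e c$ from the hypothesis.

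Next I would instantiate \cref{thm:conditions} by identifying the blocks: take $\vec X_b = Y$ as a scalar (singleton) sub-vector, and $\vec X_a$ as the remaining sub-vector in the state's joint distribution. Under this identification we have $\matr \Sigma_{aa} = \scov{}_a$, $\matr \Sigma_{ab} = \scov{}_{[.,Y]}$, $\matr \Sigma_{ba} = \scov{}_{[Y,.]}$, $\matr \Sigma_{bb} = [\scov{}_{[Y,Y]}]$, $\vec\mu_b = \mu_Y$ and $\vec x_b = c$. Since the paper assumes the covariance matrix is positive definite (\cref{sec:gaussian-background}), $\scov{}_{[Y,Y]} > 0$, and the $1{\times}1$ generalized inverse $\matr \Sigma_{bb}^{-}$ collapses to the reciprocal $1/\scov{}_{[Y,Y]}$. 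Substituting into the formulas of \cref{thm:conditions} yields precisely the $\smean'$ and $\scovprime$ obtained above, so $\vec X' \sim \gauss{\smean'}{\scovprime}$ and $\vec X' = \vec X_a \mid Y = c$, completing the argument.

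The step that requires the most care, rather than any hard calculation, is justifying that the generalized inverse reduces to scalar reciprocation; this hinges on the global invariant that all reachable states carry positive-definite covariance matrices, which itself should be (and presumably is, in the companion lemmas for \rulepassigind{}, \rulepassigdep{}, \ruleoprvs{}, \ruleopplusminus{}, \ruleopmuldiv{}) a part of the soundness argument. A minor bookkeeping remark is that the lemma is stated with $Y$ appearing last in the vector, matching the shape expected by \ruleopcond{}; the paper's observation that conditioning may be performed on any variable via a permutation (an instance of \cref{thm:affine}) is what lifts this lemma to the general case and does not need to be reproven here.
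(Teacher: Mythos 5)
Your proposal is correct and follows essentially the same route as the paper's own proof: instantiate \cref{thm:conditions} with the scalar block $\vec X_b = Y$, observe that the generalized inverse $\matr\Sigma_{bb}^-$ collapses to the reciprocal $1/\scov{}_{[Y,Y]}$, and match the resulting mean and covariance against those produced by the \ruleopcond{} rule in \cref{def:semantics}. Your added remarks on positive definiteness as a reachable-state invariant and on the permutation argument for conditioning on non-final variables are sensible elaborations but do not change the argument.
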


\noindent
Readers familiar with semantics of probabilistic programs will note that these lemmas define the cases of a pushfoward measure semantics \textit{\`a la} Kozen \cite{barthe_katoen_silva_2020} on the program statements in our language. A standard induction on the structure of well-formed programs establishes soundness of our inference engine.

We also establish termination of our inference engine.

\begin{lemma}[Termination]
  Given a well-formed program, the process of computing the resulting multivariate Gaussian always terminates.
\end{lemma}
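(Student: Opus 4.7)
The plan is to prove termination by structural induction on the syntactic structure of well-formed programs, with the for-loop case handled by a nested well-founded induction on the loop bound. First I would establish that expression evaluation $\to_e$ terminates: rules (\ruleexpconst) and (\ruleexpvar) reduce in one step, while (\ruleexpop) recurses on two strictly smaller subexpressions and then performs a single arithmetic operation, so a straightforward induction on the expression tree suffices.

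For statement evaluation $\to_s$, I would proceed by structural induction on $\stmt$. The deterministic and probabilistic assignment rules ((\ruledetassig), (\rulepassigind), (\rulepassigdep), (\ruleopplusminus), (\ruleopmuldiv), (\ruleoprvs)) and the conditioning rule (\ruleopcond) only invoke $\to_e$ on syntactic subexpressions (terminating by the previous step) and then perform a fixed, finite number of linear algebra operations on matrices and vectors whose dimension remains finite in any state (it grows by at most one per rule application, and shrinks on conditioning). Sequencing (\ruleseq) recurses on two strictly smaller statements, which terminate by the induction hypothesis.

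The only non-trivial case is the for-loop, which I would handle by a nested induction on the loop bound $v_r \in \Reals$. The base case (\ruleforb) with $v_r \le 0$ terminates immediately. For (\rulefori) with $v_r > 0$, the expansion $\stmt; x = x+1;\,\texttt{for}\,x\,\texttt{in\,range}\,(v_r - 1)\,\stmt$ reduces via sequencing into: the body $\stmt$ (which terminates by the outer structural induction, as it is a strict subterm of the for-loop), a trivial deterministic assignment, and a recursive for-loop with the same body but strictly smaller bound. Defining the measure $m(v_r) = \lceil \max(0, v_r) \rceil$, the inner induction on $m$ then yields termination in at most $m(v_r)$ iterations.

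The main obstacle is precisely this for-loop case, because the language admits real-valued bounds rather than natural numbers, so termination does not follow directly from well-foundedness of the integers. However, since each iteration decrements $v_r$ by exactly one, the sequence $v_r, v_r - 1, v_r - 2, \ldots$ contains only finitely many positive values, so the guard $v_r \le 0$ is reached after at most $m(v_r)$ steps, which suffices for the nested induction. Termination of the full program relation $\to_p$ then follows from (\rulereturn): the single statement body terminates by the statement case, and the marginalization extracting the returned subvector is a finite matrix operation on the resulting Gaussian.
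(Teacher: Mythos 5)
Your proof is correct and follows essentially the same route as the paper's: decompose by statement form, argue each rule application finishes in finitely many steps, and observe that for-loops unfold a bounded number of times. You are in fact somewhat more careful than the paper, which simply asserts that loops ``can be unfolded in linear time in the number of iterations,'' whereas you make the induction explicit and correctly handle the subtlety that the bound $v_r$ is a real number by using the measure $\lceil \max(0, v_r) \rceil$.
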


\begin{proof}
  Well-formed programs are unbounded but finite sequences of program statements.
  Thus, to prove termination, it suffices to prove that each program statement is evaluated in finite time.
  Expressions (\ruleexpvar, \ruleexpconst, \ruleexpop) and deterministic statements (\ruledetassig, \ruleseq) can be resolved in constant time.
  For-loops (\ruleforb,\rulefori) are bounded and can be unfolded in linear time in the number of iterations.
  Probabilistic assignments (\rulepassigind, \rulepassigdep) extend the mean vector with one element and the covariance matrix with a row and column vectors.
  Both operations can be performed in linear time in the number of variables of the program.
  Binary operations between random variables and values (\ruleopplusminus, \ruleopmuldiv), summation of random variables (\ruleoprvs) and conditioning (\ruleopcond) are computed as a sequence of matrix multiplication operations.
  All these operations can be computed in polynomial time in the size of the program state, which is never larger than quadratic in the number of variables in the program.
  Return (\rulereturn) performs a lookup in the mean vector and covariance matrix, which can be computed in constant time.
\end{proof}

\noindent
Our inference engine not only terminates for all well-formed programs, but, most importantly, it is \emph{very} efficient.
In~\cref{sec:eva:scalability}, we study the scalability of the inference engine, and we show that it can efficiently analyze systems with thousands of random variables.
Moreover, we show that our method scales much better than existing tools for the family of programs captured by our language.


\section{Case Study: Privacy Risks in Public Statistics}
\label{sec:case-study}

We analyze a program computing statistics on a database containing incomes for different genders and age groups.
The purpose of this case study is to demonstrate the applicability of our approach in a real-life example.
Average incomes are available through public national statistics banks~\cite{statsdk,statsnz,us_census}, which makes information available to attackers.
Leakage of private information and database reconstruction attacks are known issues (e.g., in US census data~\cite{DBLP:journals/cacm/GarfinkelAM19}).
We use our inference engine to quantify the increase of attacker knowledge, as she gradually obtains statistics from a database.
We also analyze a differentially private~\cite{dp} mechanism in this setting.
The case study uses a small database, but in~\cref{sec:eva:scalability} we show that our inference engine scales to databases with thousands of individuals.

\paragraph{Releasing public statistics.}
Consider a data analyst that releases average statistics on population income for different age groups and genders.
An attacker with access to the statistics attempts to learn the income of an individual in the database.
We consider the synthetic data shown in \cref{tab:income_data_prior} (left).
The table shows the income for 40 individuals in different age groups and genders.
We consider 3 different cases.
\begin{enumerate*}[label=\textit{Case} \arabic*)]
\item 
  the attacker obtains the average income for males in the age group 21-30.
\item  
  the attacker also obtains the average income for all people in the age group 21-30.
\item  
  the attacker also obtains the average income for all males.
\end{enumerate*}
In all cases the attacker attempts to learn the income of the first male in database in the age group 21-30. We note that the observations made by the attacker are independent, so it does not matter in which order the attacker obtains the observations.

\begin{table}[t!]
\scalebox{0.70}{
\begin{tabular}{c c c c c  }
\hline
Age group & \multicolumn{2}{c}{Males} & \multicolumn{2}{c}{Females}   \\
\hline
 \multirow{5}{*}{21-30}  &
 500~ &  480~ \vline & ~470~ &   410 \\
 &   460~ &  430~  \vline& ~420~ &   450 \\
&   490~ &  510~ \vline& ~460~ &   410 \\
 &   440~ &  480~ \vline& ~510~ &   310 \\
 &   520~ &  410~  \vline& ~370~ &   440 \\
\hline

 \multirow{5}{*}{31-40} &
 550~  &  410~  \vline&  ~450~ &   500   \\
 &  490~  &  580~  \vline&  ~520~ &   530   \\
 &  530~  &  420~  \vline&  ~510~ &  600   \\
 &  590~  &  400~  \vline&  ~620~ &   390   \\
 &  680~  &  510~  \vline&  ~550~ &   390   \\
\hline

\multirow{5}{*}{41-50}&
600~ &  540~  \vline&   ~590~  &  640 \\
&   640~ &  590~  \vline&   ~540~ &  580 \\
&   580~ &  620~  \vline&   ~740~  &  540 \\
&   340~ &  510~  \vline&   ~140~  &  830 \\
&   620~ &  660~  \vline&   ~540~  &  740 \\
\hline

\multirow{5}{*}{51-60} &
700~  & 680~  \vline&  ~690~ &  680 \\
 &  740~  & 640~  \vline&  ~720~ &  780 \\
 &  590~  & 650~  \vline&  ~680~ &  580 \\
 &  770~  & 630~  \vline&  ~590~ &  730 \\
 &  540~  & 840~  \vline&  ~640~ &  980 \\
\hline
\end{tabular}
}
\hspace{2ex}
\scalebox{0.70}{
\begin{tabular}{c c c c c  }
\hline
Age group & \multicolumn{2}{c}{Males} & \multicolumn{2}{c}{Females}   \\
\hline
 \multirow{5}{*}{21-30}
 & $\gauss{480}{100}$ ~ &  $\gauss{490}{100}$~ \vline & ~$\gauss{450}{100}$ ~&   $\gauss{430}{100}$ \\

 &   $\gauss{440}{100}$ ~ &  $\gauss{420}{100}$~  \vline& ~$\gauss{410}{100}$ ~&   $\gauss{430}{100}$ \\

&   $\gauss{490}{100}$ ~&  $\gauss{490}{100}$~  \vline& ~$\gauss{470}{100}$ ~&   $\gauss{400}{100}$ \\

 &   $\gauss{520}{100}$ ~&  $\gauss{490}{100}$~  \vline& ~$\gauss{490}{100}$ ~&   $\gauss{330}{100}$ \\

 &   $\gauss{470}{100}$ ~&  $\gauss{400}{100}$~  \vline& ~$\gauss{350}{100}$ ~&   $\gauss{400}{100}$ \\
\hline

 \multirow{5}{*}{31-40}
 &   $\gauss{500}{100}$ ~&  $\gauss{410}{100}$~  \vline& ~$\gauss{400}{100}$ ~&   $\gauss{450}{100}$ \\
 &   $\gauss{470}{100}$ ~&  $\gauss{490}{100}$~  \vline& ~$\gauss{490}{100}$ ~&   $\gauss{480}{100}$ \\
 &   $\gauss{500}{100}$ ~&  $\gauss{410}{100}$~  \vline& ~$\gauss{500}{100}$ ~&   $\gauss{550}{100}$ \\
 &   $\gauss{540}{100}$ ~&  $\gauss{410}{100}$~  \vline& ~$\gauss{590}{100}$ ~&   $\gauss{350}{100}$ \\
 &   $\gauss{500}{100}$ ~&  $\gauss{400}{100}$~  \vline& ~$\gauss{510}{100}$ ~&   $\gauss{360}{100}$ \\
\hline

\multirow{5}{*}{41-50}&
$\gauss{580}{100}$ ~ &  $\gauss{530}{100}$~ \vline & ~$\gauss{550}{100}$ ~&   $\gauss{490}{100}$ \\
 &   $\gauss{590}{100}$ ~ &  $\gauss{510}{100}$~  \vline& ~$\gauss{520}{100}$ ~&   $\gauss{500}{100}$ \\
&   $\gauss{560}{100}$ ~&  $\gauss{590}{100}$~  \vline& ~$\gauss{650}{100}$ ~&   $\gauss{480}{100}$ \\
 &   $\gauss{280}{100}$ ~&  $\gauss{500}{100}$~  \vline& ~$\gauss{150}{100}$ ~&   $\gauss{790}{100}$ \\
 &   $\gauss{580}{100}$ ~&  $\gauss{600}{100}$~  \vline& ~$\gauss{510}{100}$ ~&   $\gauss{700}{100}$ \\
\hline

\multirow{5}{*}{51-60}
& $\gauss{680}{100}$ ~ &  $\gauss{570}{100}$~    \vline & ~$\gauss{680}{100}$ ~&   $\gauss{670}{100}$ \\
 &   $\gauss{620}{100}$ ~ &  $\gauss{610}{100}$~  \vline& ~$\gauss{690}{100}$ ~&   $\gauss{700}{100}$ \\
&   $\gauss{600}{100}$ ~&  $\gauss{570}{100}$~  \vline& ~$\gauss{630}{100}$ ~&   $\gauss{570}{100}$ \\
 &   $\gauss{700}{100}$ ~&  $\gauss{600}{100}$~  \vline& ~$\gauss{500}{100}$ ~&   $\gauss{670}{100}$ \\
 &   $\gauss{520}{100}$ ~&  $\gauss{770}{100}$~  \vline& ~$\gauss{600}{100}$ ~&   $\gauss{770}{100}$ \\
\hline
\end{tabular}
}
\caption{
  \textit{Left}: Incomes pr. year in DKK for different age groups and genders. The numbers have been scaled down by a factor of 1000.
  \textit{Right}: Priors used in the experiments. The means are scaled down by a factor of 1000.
}
\label{tab:income_data_prior}
\end{table}

To understand the privacy risks for these 3 cases we use the Privug method described in \Cref{sec:preliminaries}.
In the following code snippet we show how to model the steps for case 1 using our inference engine.
We model the program using our syntax (step 2).
\begin{lstlisting}[ label={lst:1} ]
def agg():
    male_21_30 = [Normal(480_000, 100), ....]
    male_21_30_total = male_21_30[0] + male_21_30[1] ...
    male_21_30_average = male_21_total/10
    condition("male_21_30_average", 472_000)
    return male_21_30[0]
\end{lstlisting}
The array in line 2 contains the priors of the income for the individuals in the database---\cref{tab:income_data_prior} (right) shows the complete list---denoted as $P(I_i)$.
They represent the possible incomes that the attacker considers possible before making any observations (step 1).
The victim is the first male in the 21-30 age group, $P(I_1)$.
In lines 2-3, we compute the average income of this each group, which defines $P(O|I_i)$.
In line 5, we add the attacker observation in the condition statement (step 3).
Finally, in line 6 we return the posterior distribution of the victim $P(I_1|O)$, which represents the updated attacker knowledge (step 4).
Note that, for brevity, we omit the repetitive parts of the code. Also, arrays are syntactic sugar.
We refer interested readers to~\cite{gauss-privug-doi} for the complete source code.

\Cref{fig:risk} shows how attacker knowledge is updated in the 3 cases above, and how close it is to real victim data (vertical line). 
We plot the prior attacker knowledge $P(I_1)$, and for each case we plot the posterior distribution after conditioning on the output $P(I_1 | O)$.
The left plot shows the updated attacker knowledge without differential privacy. As the plot shows, the attacker knowledge gets closer to the actual income when obtaining more information. The most accurate attacker knowledge is case 3 where the attacker obtains several average statistics.
\paragraph{Releasing public statistics with differential privacy.}
Given the above results, the data analyst decides to use a differentially private mechanism~\cite{dp} to protect the individuals' privacy.
Differential Privacy (DP) is used in realistic settings for the release of public statistics.
Notably, it was used in the 2020 US Census as a result of privacy issues in previous US Census editions~\cite{DBLP:journals/cacm/GarfinkelAM19}.
Intuitively, if the privacy protection mechanism satisfies differential privacy, then the impact of an individual on the output of the program is negligible.
More precisely, differential privacy states that: a randomized mechanism $\mathcal{M} : \Inputs \to \Outputs$ is $(\epsilon,\delta)$-differentially private if for all $\mathcal{O} \subseteq \Outputs$, and neighboring inputs $i_1, i_2 \in \Inputs$, the following holds
$$
P(\mathcal{M}(i_1) \in \mathcal{O}) \leq \exp(\epsilon)P(\mathcal{M}(i_2) \in \mathcal{O}) + \delta.
$$
The neighboring relation between inputs depends on the input domain ($\Inputs$).
For instance, when it applies to datasets of $n$ natural numbers, $\mathbb{N}^n$, it is usually defined as the first norm $||i_x - i_y||_1$.
The parameter $\epsilon$ is often referred to as the \emph{privacy parameter}, and it is used to specify the required level of privacy.
The parameter $\delta$ is the probability of failure.
This parameter relaxes the definition of differential privacy.
It is used to specify the probability that pure differential privacy (i.e., with $\delta = 0$) does not hold.
This parameter may be used to, e.g., enable high utility gains while keeping a good level of privacy.
Both $\epsilon$ and $\delta$ are often determined empirically~\cite{Dwork.expose_epsilon.2019}.

We analyze a differentially private mechanism for the 3 cases presented above.
To this end, we apply the Gaussian mechanism~\cite{dp}, which adds Gaussian noise to the observable output ($\mathit{o}$) as $\mathit{o} + \gauss{0}{\std}$.
The parameter $\std$ is calculated as follows:
$
\std = 2\Delta^{2}\log(1.25/\delta) / \epsilon^{2}.
$
The sensitivity $\Delta \in \Reals$ denotes how much $\mathit{o}$ changes if computed in two datasets differing in at most 1 entry.
In our setting, it is  $ \Delta = (\mathit{max}_{\mathit{income}} - \mathit{min}_{\mathit{income}}) / \mathit{size}_{\mathit{DB}}$.
We set $ \delta = 1 / \mathit{size}_{\mathit{DB}}^{2}$---as usual for this query~\cite{dp}.
We set $\epsilon$ to 0.9---this is an arbitrary value, but it is common to use values < 1 in practice~\cite{Dwork.expose_epsilon.2019}.
Adding Gaussian noise is proven to satisfy $(\epsilon,\delta)$-differential privacy~\cite{dp}.
We remark that our method can be used to determine the values of $\epsilon$ and $\delta$ that satisfy high level privacy requirements.
For instance, privacy requirements specified as probability queries for a given individual or using quantitative information flow metrics~\cite{privug}.
The program implementing the Gaussian mechanism is shown in the following listing
\begin{lstlisting}[ label={lst:2} ]
def agg_dp():
    ...
    noise = Normal(0, 1442533240)
    male_21_30_average_dp = male_21_30_average + noise
    condition("male_21_30_average_dp", 472_000)
    return male_21_30[0]
\end{lstlisting}
We only show lines that change namely: line 3 where the noise distribution is defined, and line 6 where we add the noise to the output. The variance $\std$ of the noise distribution is calculated using the equation above.

The right plot in \cref{fig:risk} shows the updated attacker knowledge in the 3 cases.
We observe a decrease in privacy risks when using differential privacy; as the change in attacker knowledge is insignificant for all cases.
The plot shows that the impact of the victim's data on the released statistics is minuscule compared to the non-differentially private version of the output.

\begin{figure}[t!]
    \centering
    \includegraphics[width=\textwidth]{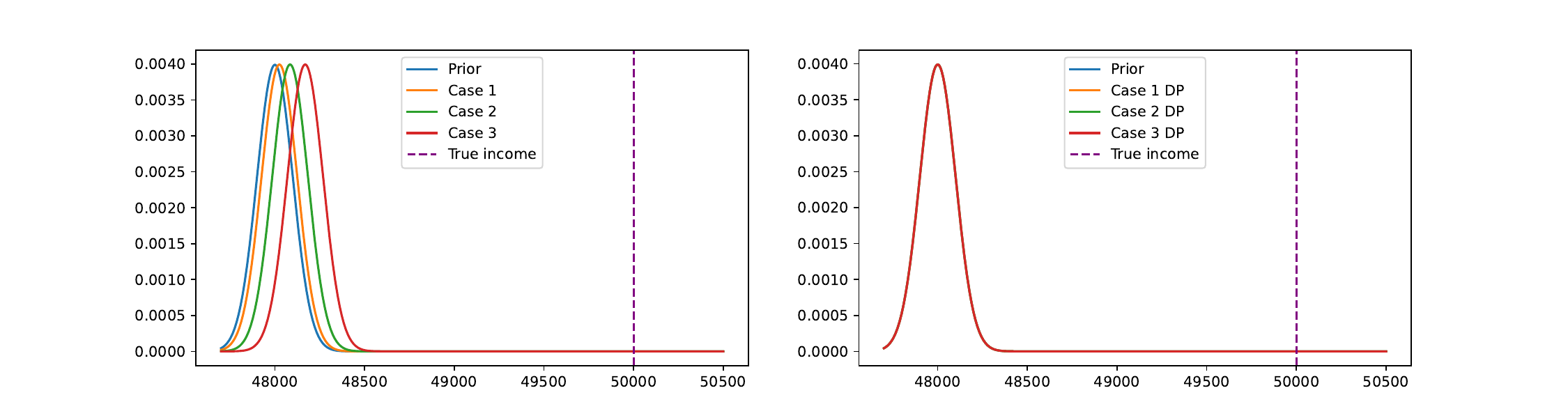}
    \caption{Updated attacker knowledge after adding observations. Incomes are scaled down by a factor of 10. \textit{Left}: Public stats. \textit{Right}: Public stats with DP.}
    \label{fig:risk}
\end{figure}

\paragraph{Information leakage metrics.}
In addition to inspecting the distributions of attacker knowledge in~\cref{fig:risk}, we show for demonstration how to compute two metrics for information leakage: KL-divergence and mutual information~\cite{elementsofinformationtheory.2006}.
Let $[P,Q]^\intercal \sim \gauss{\smean}{\scov{}}$.
KL-divergence is $\mathit{KL}(P,Q) = \log_{2} (\matr{\Sigma}^{1/2}_{[Q,Q]} / \matr{\Sigma}^{1/2}_{[P,P]}) + (\matr{\Sigma}_{[P,P]} + (\mu_{P} - \mu_{Q})^{2})  / 2\matr{\Sigma}_{[Q,Q]} - 1/2$, and mutual information is $I(P,Q) = $ $1/2 \log_2 (\matr{\Sigma}_{[P,P]} \matr{\Sigma}_{[Q,Q]} / |\matr{\Sigma}|)$.

\Cref{tab:qif_metrics_evaluation} shows the results.
The left shows the KL-divergence between prior and posterior attacker knowledge on the secret.
Intuitively, this is commonly understood as \emph{information gain}~\cite{burnham.anderson.klinformationgain.2002}.
We observe an increase in information gain from case 1 to 3 (both with and without differential privacy).
However, with differential privacy, information gain is virtually 0 for all cases.
\cref{tab:qif_metrics_evaluation} (right) shows mutual information between attacker knowledge on the secret and the output.
When mutual information between two random variables is 0, it means that the variables are independent.
Thus, a value of mutual information close to 0 indicates that the amount of information shared between secret and output is low.
We observe that mutual information decreases from case 1 to 3 (both with and without differential privacy).
This is due to the output containing information for a larger set of individuals (which minimizes the effect of the secret on the output).
As expected, mutual information is lower with differential privacy.
Admittedly, these metrics are hard to interpret in practice, but we remark that more important than the concrete values is their relative distance---it provides a quantitative mean to compare information leakage in different settings.

\begin{table}[t!]
  \begin{center}
    \begin{tabular}{ l@{\hskip 5ex} r@{\hskip 2ex} r@{\hskip 2ex} r@{\hskip 5ex} r@{\hskip 2ex} r@{\hskip 2ex} r }
      \toprule
      & \multicolumn{3}{c}{KL divergence} & \multicolumn{3}{c}{Mutual information} \\
      & Case 1 & Case 2 & Case 3 & Case 1 & Case 2 & Case 3  \\
      \midrule[0.15mm]
      Public stats &  312.762 &  3621.29  & 14374.13  &  4.17e-04  &  6.25e-05  &  7.81e-06  \\
      Diff. Priv.  &  1.55e-12 &    1.72e-12  &   1.81e-12  &  5.00e-12  &  3.14e-13  &  2.16e-14 \\
      \bottomrule
    \end{tabular}
  \end{center}
  \caption{\textit{Left}: KL-divergence between prior and posterior attacker knowledge on secret. \textit{Right}: Mutual information between secret and output random variables.}
  \label{tab:qif_metrics_evaluation}
  \label{tab:kl-divergence}
  \label{tab:mutual_information}
\end{table}


\section{Scalability Evaluation}
\label{sec:eva:scalability}
We evaluate the scalability of our exact inference engine proof-of-concept implementation.
The scalability of Bayesian inference engines mainly depends on the number of random variables.
Thus, we consider two synthetic benchmark programs with increasing number of variables.
The first computes the sum over an increasing number of variables $O = \sum^n_{i=1} X_i $.
We choose this benchmark as it was originally used to measure the scalability of Privug~\cite{privug}.
We compare the scalability of our engine to Privug MCMC using the NUTS~\cite{NUTS} sampler and the exact inference engine PSI~\cite{gehr_psi_2020}---PSI is the leading inference engine supporting the features of our language (cf.~\cref{sec:related}).
We instruct NUTS to draw 10000 samples in 2 chains---this number of samples produces an accurate posterior in this benchmark, see~\cite{privug}.
The second program performs the same computation but adds a condition statement $\texttt{\upshape condition}(O,c)$.
The purpose is to evaluate the scalability of our engine in more realistic settings for the case study in~\cref{sec:case-study}.
The evaluation run on a 4x2.80GHz cores machine with 16 GB RAM.

\Cref{fig:scalability} (left) shows the measured times for the first program. 
Execution time does not increase significantly when going from 100 to 700 variables using our engine.
On the other hand, PSI takes approximately 40 minutes when summing 700 variables.
Most notably, our engine greatly outperforms PSI---in this experiment, it was more than 40.000 times faster than PSI for 700 variables.
Privug MCMC exhibits better results, but our engine scales better.
As the number of variables increases, we observe a bigger gap between our engine and Privug MCMC---with our engine 6 times faster for 700.
It is noteworthy that our exact engine outperforms an approximate inference method.

\Cref{fig:scalability} (middle,right) focus on the scalability for larger systems.
The middle plot, shows that our engine can handle the first program with 70000 variables more efficiently that PSI for 700.
\Cref{fig:scalability} (right) mimics the case study (\cref{sec:case-study}).
We observe that the condition statement notably degrades the performance of our engine.
However, the running time for 5000 individuals is less than 40min.
We omitted PSI in this benchmark as conditioning would only decrease its performance, and the previous experiment showed its lower scalability w.r.t. our engine.
Privug MCMC is also omitted as a fair comparison requires determining the number of samples to draw to obtain an accurate posterior.

\begin{figure}[!t]
    \centering
    \includegraphics[width=1.0\textwidth]{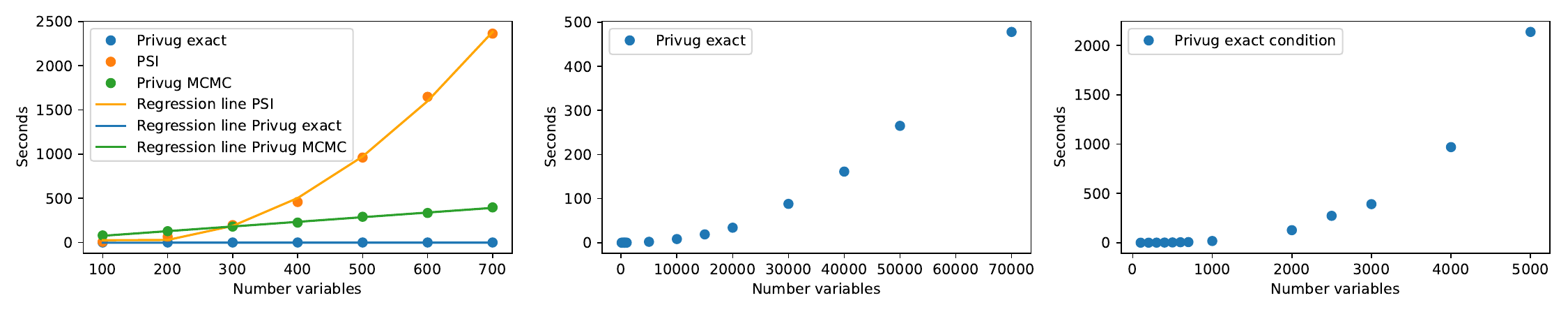}
    \caption{
      Execution time for our engine (privug-exact), Privug NUTS (privug-mcmc) and PSI.
      \textit{Left}: privug-mcmc, privug-exact, and PSI on $O = \sum^{n}_i X_i$. 
      \textit{Middle}: privug-exact on $O = \sum^{n}_i X_i$. 
      \textit{Right}: privug-exact on $\texttt{\upshape condition}(O,c)$.
    }
    \label{fig:scalability}
\end{figure}


\section{Related Work}
\label{sec:related}

%
The majority of existing methods to estimate privacy risks use sampling based techniques~\cite{privug,romanelli.leaves.2020,cherubin.fbleau.2019,chothia.leakwatch.2014,chothia.leakest.2013,DBLP:conf/csfw/ChothiaG11}.
In~\cite{privug}, Privug made use of MCMC algorithms to perform Bayesian inference, e.g., \textit{Metropolis-Hastings} or \emph{Hamiltonian Monte Carlo}~\cite{avi_pfeffer_practical_2016,greenberg_understanding_nodate,NUTS}.
Other sampling based methods target specific quantitative information flow metrics~\cite{qifbook.2020}---these metrics are supported by Privug~\cite{privug}, and hence by our engine.
LeakWatch/Leakiest~\cite{chothia.leakwatch.2014,chothia.leakest.2013} use program samples to estimate mutual information between secret inputs and public outputs.
Cherubin et al. and Romanelli et al.~\cite{cherubin.fbleau.2019,romanelli.leaves.2020}, use machine learning to compute metrics from the g-leakage family~\cite{qifbook.2020}.
These methods treat programs as black-boxes, so they can analyze any program, as opposed to our method that targets a subset of Python programs.
However, their accuracy guarantees are proven in the limit, i.e., assuming an infinite size sample.
In practice, samples are finite and it is often difficult to ensure that results are accurate; specially for programs with large number of variables (such as the ones in~\cref{sec:eva:scalability}).
On the contrary, our inference engine produces exact results.
This is crucial as a under-approximations could miss important privacy breaches.
Furthermore, the scalability evaluation shows that the inference engine scales better than MCMC-based Privug, which is one of the most scalable methods for this type of systems~\cite{privug}.
\looseness -1

%
There exist several works that use exact inference in the context of privacy risk analysis.
SPIRE~\cite{spire} uses the exact inference engine PSI~\cite{PSI,gehr_psi_2020} to model attacker knowledge and synthesize privacy enforcers.
PSI computes a symbolic representation of the joint probability distribution of a given program.
It can handle continuous and discrete random variables.
It targets a more expressive programming language than the subset of Python that our engine supports.
However, PSI scales poorly compared to our engine for programs that our engine supports (cf.~\cref{sec:eva:scalability}).
Hakaru~\cite{hakaru} and SPPL~\cite{SPPL} are exact inference engines---not used for privacy risk analysis.
We did not consider them in our evaluation because they do not handle some features of our language.
Hakaru cannot handle conditioning probability-zero events (as in~\cref{lemma:cond}).
SPPL does not support linear combination and sum of Gaussians (as in lemmas~\ref{lemma:cond_asg}, \ref{lemma:sum_rvs}).
QUAIL~\cite{QUAIL} computes mutual information between input and output variables.
It performs forward state exploration of a program to construct a Markov chain, which is then used to compute mutual information.
QUAIL works on discrete random variables.
Instead, our inference engine works on Gaussian (continuous) random variables and computes the posterior distribution that can be used to compute mutual information (cf.~\cref{sec:case-study}) and other quantitative information flow metrics~\cite{privug,qifbook.2020}.

Stein and Staton proposed a Gaussian-based semantics to study exact conditioning through the lens of category theory~\cite{stein.gauss.exact.2021}.
They do not study the use of the semantics for privacy risks quantification on a subset of Python programs, or evaluate the efficiency of the semantics.


\section{Conclusion}
\label{sec:conclusion}

We have presented an exact Bayesian inference engine for quantifying privacy risks in a subset of Python.
We have proven that our inference engine is sound.
We have presented an application of our engine to analyze privacy risks on public statistics; a realistic case study for national statistics agencies where privacy risks analysis is crucial.
We have also analyzed the impact of differential privacy on data release.
In the scalability evaluation, we have shown that our engine can analyze systems with thousands of random variables, and that it greatly outperforms existing tools.
All in all, this work provides a new point in the study of expressiveness vs performance.
Future work includes adapting our engine with underlying probabilistic models that capture more Python program statements, for instance Gaussian mixtures or the exponential family of probability distributions.
\looseness -1


\bibliographystyle{splncs04}
\bibliography{references}

\newpage
\appendix

\section{Proofs}
\label{sec:proofs_lemmas}

\setcounter{lemma}{0} 
\begin{lemma}[Sum random var.]
  Let $[X_1, X_2, \ldots]^\intercal \sim \gauss{\smean}{\scov{}}$. 
  For all states $\s = \langle \smean, \scov{}, \sstore \rangle$, if $\langle Y = X_i + X_j, \s \rangle \rangle \to_s \langle \smean', \scovprime, \sstore \rangle$ and $[X_1, X_2, \ldots, X_i + X_j]^\intercal \sim \gauss{\smean''}{\scovdoubleprime}$, then $[X_1, X_2, \ldots, Y]^\intercal \sim \gauss{\smean'}{\scovprime}$ and $\gauss{\smean'}{\scovprime} = \gauss{\smean''}{\scovdoubleprime}$.
\end{lemma}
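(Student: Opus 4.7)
The natural approach is to reduce the statement to the affine transformation property of multivariate Gaussians (\cref{thm:affine}), since summing two coordinates of a Gaussian vector and appending the result is exactly a linear map applied to the original vector.

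First, I would construct an explicit projection matrix $\matr A \in \Reals^{(n+1)\times n}$ defined so that $\matr A_{kk} = 1$ for $k = 1,\dots,n$, $\matr A_{(n+1)i} = \matr A_{(n+1)j} = 1$, and all other entries are zero; and take $\vec b = \vec 0$. A direct coordinate-wise check then gives $\matr A \vec X + \vec b = [X_1,\dots,X_n, X_i + X_j]^\intercal$, which, combined with the hypothesis, yields the ``target'' distribution $\gauss{\smean''}{\scovdoubleprime}$ via \cref{thm:affine} as $\gauss{\matr A \smean}{\matr A \scov{} \matr A^\intercal}$.

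Next, I would compute $\matr A \smean$ and $\matr A \scov{} \matr A^\intercal$ block-wise. Because $\matr A$ acts as the identity on the first $n$ rows, the top-left $n\times n$ block of $\matr A \scov{} \matr A^\intercal$ is simply $\scov{}$, the new mean component is $\mu_{X_i} + \mu_{X_j}$, the new off-diagonal column/row is $\scov{}_{[.,X_i]} + \scov{}_{[.,X_j]}$ (resp.\ its transpose), and the bottom-right scalar is $\scov{}_{[X_i,X_i]} + \scov{}_{[X_j,X_j]} + \scov{}_{[X_i,X_j]} + \scov{}_{[X_j,X_i]}$. These are exactly the entries of $\smean'$ and $\scovprime$ prescribed by rule \ruleoprvs\ in \cref{fig:semantics-rules} (specialised to $Y = X_i + X_j$), so by \cref{def:semantics} we obtain $\smean' = \matr A \smean + \vec b$ and $\scovprime = \matr A \scov{} \matr A^\intercal$.

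Finally, I would conclude by noting that the left-hand side $[X_1,\dots,X_n,Y]^\intercal$ is, by construction of the semantics step, definitionally the vector $[X_1,\dots,X_n, X_i + X_j]^\intercal$ whose distribution we just identified; hence $[X_1,\dots,X_n,Y]^\intercal \sim \gauss{\smean'}{\scovprime}$ and the two Gaussians coincide. The one genuinely fiddly step is the block-matrix bookkeeping in $\matr A \scov{} \matr A^\intercal$---keeping track of the fact that summing \emph{two} unit entries in the $(n+1)$-th row produces both a sum of covariance columns on the off-diagonal and the four-term cross-covariance expression on the diagonal---but it is routine once $\matr A$ is written down explicitly.
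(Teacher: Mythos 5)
Your proposal is correct and takes essentially the same route as the paper's own proof: both construct the $(n+1)\times n$ projection matrix with an identity block on top and a final row with unit entries at positions $i$ and $j$, set $\vec b = \vec 0$, invoke \cref{thm:affine}, and match the resulting block decomposition of $\matr A \scov{} \matr A^\intercal$ against the entries prescribed by \ruleoprvs. No substantive difference to report.
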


\begin{proof}  
  Let $\matr{A}$ be a $m \times n$ projection matrix where $\matr{A}_n = I_n$ where $I_n$ is a $n \times n$ identity matrix, $\matr{A}_{n+1}[i] = \matr{A}_{n+1}[j] = 1$ and $\matr{A}_{n+1}[k] = 0$ for $i,j \not= k$.
  Let $\vect{b} = \vect{0}$.
  Then, by matrix multiplication $[X_1, X_2, \ldots, X_i + X_j]^\intercal = \matr{A}\vect{X} + \vect{b}$.
  By~\cref{thm:affine}, $[X_1, X_2, \ldots, X_i + X_j]^\intercal \sim \gauss{\matr{A}\vect{\mu} + \vect{b}}{\matr{A}\scov{}\matr{A}^\intercal}$.
  By matrix multiplication
  \begin{equation}
    \begin{split}
    \matr{A}\vect{\mu} + \vect{b} &= 
    \begin{bmatrix}
      \vect{\mu} \\
      \mu_{X_i} + \mu_{X_j}
    \end{bmatrix}
    \\
    \matr{A}\scov{}\matr{A}^\intercal &= 
    \begin{bmatrix}
      \scov{} & \scov{}_{[.,X_i]} + \scov{}_{[.,X_j]} \\
      \scov{}_{[X_i,.]} + \scov{}_{[X_j,.]} & \scov{}_{[X_i,X_i]} + \scov{}_{[X_j,X_j]} + \scov{}_{[X_i,X_j]} + \scov{}_{[X_j,X_i]}
    \end{bmatrix}
    \end{split}
    \label{eq:affine_sum_rvs}
  \end{equation}
  By~\cref{def:semantics} and \cref{eq:affine_sum_rvs} we have that $\smean' = \matr{A}\vect{\mu} + \vect{b}$ and $\scovprime = \matr{A}\scov{}\matr{A}^\intercal$.
  Thus, $[X_1, X_2, \ldots, Y]^\intercal \sim \gauss{\smean'}{\scovprime}$ and $\gauss{\smean'}{\scovprime} = \gauss{\matr{A}\vect{\mu} + \vect{b}}{\matr{A}\scov{}\matr{A}^\intercal}$.
\end{proof}

\begin{lemma}[Independent assignment]
  Let $[X_1, X_2, \ldots]^\intercal \sim \gauss{\smean}{\scov{}}$.
  For all states $\s = \langle \smean, \scov{}, \sstore \rangle$, if $\langle Y = \texttt{\upshape Normal}(e_1, e_2), \s \rangle \rangle \to_s \langle \smean', \scovprime, \sstore \rangle$ and $\langle \s, e_i \rangle \to_e c_i$, then $[X_1,X_2,\ldots,Y]^\intercal \sim \gauss{\smean'}{\scovprime}$ and $Y \sim \gauss{c_1}{c_2}$ and $Y, X_i$ are independent for all $X_i$.
\end{lemma}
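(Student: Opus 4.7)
The plan is to reduce this to a direct application of Theorem~\ref{thm:linear} (the linear-combination theorem for Gaussians) using a degenerate choice of coefficients, and then to extract the marginal distribution of $Y$ and independence from Theorems~\ref{thm:marginals} and~\ref{thm:independence}.

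First I would unpack the rule $(\rulepassigind)$ and the expression evaluation premise $\langle e_i, \s \rangle \to_e c_i$ to fix the output state as
$\smean' = \begin{bmatrix} \smean \\ c_1 \end{bmatrix}$ and
$\scovprime = \begin{bmatrix} \scov{} & \vec 0 \\ \vec 0 & c_2 \end{bmatrix}$,
with the side condition $c_2 > 0$ guaranteeing that the extended covariance is positive definite (block-diagonal with a positive diagonal extension of a positive-definite block).

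Second, I would invoke Theorem~\ref{thm:linear} with the choices $\vec a = \vec 0 \in \Reals^{n \times 1}$, $b = c_1$, and $\sigma^2 = c_2$. This encodes the independently sampled Gaussian as a (trivial) linear combination that ignores $\vec X$. The theorem immediately yields that $[\vec X^\intercal, Y]^\intercal \sim \gauss{[\smean^\intercal, \vec a^\intercal \smean + b]^\intercal}{\scovprime}$ where $\scovprime_{1..n,1..n} = \scov{}$, $\scovprime_{(n+1)(n+1)} = c_2 + \vec 0^\intercal \scov{} \vec 0 = c_2$, and $\scovprime_{i(n+1)} = \sum_j 0 \cdot \scov{}_{ij} = 0$. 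Since $\vec a^\intercal \smean + b = c_1$, this is exactly the state produced by $(\rulepassigind)$, establishing that $[X_1, X_2, \ldots, Y]^\intercal \sim \gauss{\smean'}{\scovprime}$.

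Third, to conclude $Y \sim \gauss{c_1}{c_2}$, I would apply Theorem~\ref{thm:marginals} to the joint to extract the marginal on the last coordinate, whose mean is $c_1$ and variance is $\scovprime_{(n+1)(n+1)} = c_2$. Finally, for each $i \in \{1, \dots, n\}$, the covariance $\scovprime_{i(n+1)} = 0$, so Theorem~\ref{thm:independence} yields independence of $X_i$ and $Y$.

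I do not expect a genuine obstacle here: the only subtlety is recognising that Theorem~\ref{thm:linear} can be instantiated with $\vec a = \vec 0$, which turns the ``linearly dependent'' statement into an ``independently drawn'' one and thus recycles the machinery without needing a separate closure argument. The positive-definiteness precondition of multivariate Gaussians (assumed throughout \cref{sec:gaussian-background}) is preserved because $c_2 > 0$ is enforced as a side condition of the operational rule.
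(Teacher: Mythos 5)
Your proof is correct, but it takes a genuinely different route from the paper's. The paper's proof works backwards: it \emph{posits} a joint Gaussian $[X_1, X_2, \ldots, Y]^\intercal \sim \gauss{\smean''}{\scovdoubleprime}$ in which $Y \sim \gauss{c_1}{c_2}$ and $Y$ is independent of each $X_i$, then uses \cref{thm:independence} to force the off-diagonal blocks $\scovdoubleprime_{[\cdot,Y]}$, $\scovdoubleprime_{[Y,\cdot]}$ to be zero and \cref{thm:marginals} to force $\mu''_Y = c_1$, $\scovdoubleprime_{[Y,Y]} = c_2$ and the remaining blocks to equal $\smean$, $\scov{}$, and finally observes that these parameters coincide with the output of (\rulepassigind). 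You instead \emph{construct} the joint by instantiating \cref{thm:linear} with $\vec a = \vec 0$, $b = c_1$, $\sigma^2 = c_2$, and then read off the marginal and independence from \cref{thm:marginals,thm:independence}. Your direction has a small advantage: the paper's opening step quietly assumes that a joint multivariate Gaussian with the prescribed marginal for $Y$ and pairwise independence from the $X_i$ exists, which is not literally one of the stated theorems, whereas your degenerate application of \cref{thm:linear} establishes existence of the joint (and its exact parameters) in one stroke and recycles the same machinery used for the dependent-assignment lemma. The paper's approach, in turn, avoids the slight conceptual detour of viewing an independent draw as a conditional distribution $Y \mid \vec X \sim \gauss{c_1}{c_2}$ with zero coefficients. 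Both arguments reach the same covariance structure and the same conclusion; your observation about $c_2 > 0$ preserving positive definiteness of the block-diagonal extension is also consistent with the paper's standing assumption on covariance matrices.
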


\begin{proof}
  Let $[X_1, X_2, \ldots, Y]^\intercal \sim \gauss{\smean''}{\scovdoubleprime}$ where $Y \sim \gauss{c_1}{c_2}$ and $Y,X_i$ are independent for all $X_i$.
  Then, by~\cref{thm:independence}, $\scovdoubleprime_{[.,Y]} = \vect{0}$ and $\scovdoubleprime_{[Y,.]} = \vect{0}^\intercal$.
  Also, by~\cref{thm:marginals}, $\mu''_Y = c_1$ and $\scovdoubleprime_{[Y,Y]} = c_2$, and $\smean''_{[X_1, X_2, \ldots]} = \smean$ and $\scovdoubleprime_{[X_1, X_2, \ldots]} = \scov{}$.
  By~\cref{def:semantics}, $\smean' = \smean''$ and $\scovprime = \scovdoubleprime$.
  Thus, $[X_1,X_2,\ldots,Y]^\intercal \sim \gauss{\smean'}{\scovprime}$ and $Y \sim \gauss{c_1}{c_2}$ and $Y, X_i$ are independent for all $X_i$.
\end{proof}

\begin{lemma}[Dependent assignments]
  Let $[X_1, X_2, \ldots]^\intercal \sim \gauss{\smean}{\scov{}}$. For all states $\s = \langle \smean, \scov{}, \sstore \rangle$, if $\langle Y = \texttt{\upshape Normal}(e_1*X_i + e_2, e_3), \s \rangle \rangle \to_s \langle \smean', \scovprime, \sstore \rangle$ and $\langle \s, e_i \rangle \to_e c_i$, then $[X_1, X_2, \ldots, Y]^\intercal \sim \gauss{\smean'}{\scovprime}$ and $Y \mid X_i \sim \gauss{c_1X_i + c_2}{c_3}$.
\end{lemma}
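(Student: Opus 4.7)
The plan is to reduce the statement to \cref{thm:linear} by a careful choice of the linear coefficient vector. The second claim, $Y \mid X_i \sim \gauss{c_1 X_i + c_2}{c_3}$, follows by construction from the meaning of the probabilistic assignment \texttt{Normal}$(e_1 * X_i + e_2, e_3)$ together with the evaluation $\langle e_k, \s\rangle \to_e c_k$ for $k = 1, 2, 3$, so the substantive work is establishing that $[X_1, X_2, \ldots, Y]^\intercal \sim \gauss{\smean'}{\scovprime}$ with $\smean', \scovprime$ as produced by the rule \rulepassigdep.

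First I would introduce $\vec a \in \Reals^{n \times 1}$ with $a_i = c_1$ and $a_k = 0$ for $k \ne i$, together with $b = c_2$ and $\sigma^2 = c_3 > 0$. Then $\vec a^\intercal \vec X = c_1 X_i$ so that $Y \mid \vec X \sim \gauss{\vec a^\intercal \vec X + b}{\sigma^2}$, which is exactly the hierarchical form required by the hypothesis of \cref{thm:linear}. (The fact that the conditional mean depends only on $X_i$ rather than all of $\vec X$ is captured by the sparsity of $\vec a$ and poses no difficulty.)

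Next I would invoke \cref{thm:linear} to conclude that $[\vec X^\intercal, Y]^\intercal$ is jointly multivariate Gaussian. Unpacking the three formulas from that theorem with our specific $\vec a$ yields: the augmented mean is $\vec a^\intercal \smean + b = c_1 \mu_{X_i} + c_2$; the top-left $n \times n$ covariance block is $\scov{}$ unchanged; the new diagonal entry is $\sigma^2 + \vec a^\intercal \scov{} \vec a = c_3 + c_1^2 \scov{}_{[X_i, X_i]}$; and the new off-diagonal column is $\sum_{j=1}^n a_j \scov{}_{kj} = c_1 \scov{}_{k, X_i}$, i.e.\ the vector $c_1 \scov{}_{[., X_i]}$.

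Finally, I would compare these four ingredients term-by-term with the $\smean'$ and $\scovprime$ given by \rulepassigdep in \cref{fig:semantics-rules}, concluding $\gauss{\smean'}{\scovprime}$ equals the joint distribution delivered by \cref{thm:linear}, which establishes the lemma. The main obstacle here is purely notational: the rule \rulepassigdep uses $Y$ for the existing (conditioning) variable and $X$ for the new one, whereas the lemma uses $X_i$ and $Y$ in the opposite roles, so care is needed to translate $\mu_Y, \scov{}_{[.,Y]}, \scov{}_{[Y,Y]}$ in the rule to $\mu_{X_i}, \scov{}_{[., X_i]}, \scov{}_{[X_i, X_i]}$ in the lemma's setting. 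Once that dictionary is fixed, the match is immediate and no further computation is required.
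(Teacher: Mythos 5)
Your proposal is correct and follows essentially the same route as the paper's own proof: both instantiate \cref{thm:linear} with the sparse coefficient vector $\vec a$ having $c_1$ in position $i$ and zeros elsewhere, $b = c_2$, and then match the resulting mean, variance, and covariance formulas against the state produced by \rulepassigdep. Your additional remark about the variable-naming mismatch between the rule and the lemma is a fair observation but does not change the argument.
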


\begin{proof}
  We instantiate~\cref{thm:linear} for a linear Gaussian combination of one random variable in $\gauss{\smean}{\scov{}}$ and show that the posterior equals $\gauss{\smean'}{\scovprime}$.
  Let $\vect{a}$ be column vector such that $\vect{a}[i] = c_1$ and $\vect{a}[j]=0$ for $i \not= j$, and $b=c_2$.
  Then, $\vect{a}^\intercal[X_1, X_2, \ldots]^\intercal + c_2 = c_1X_i + c_2$.
  By~\cref{thm:linear}, if $[X_1, X_2, \ldots]^\intercal \sim \gauss{\smean}{\scov{}}$ and $Y \mid X_i \sim \gauss{c_1X_i + c_2}{c_3}$, then $[X_1, X_2, \ldots, Y]^\intercal \sim \gauss{\smean''}{\scovdoubleprime}$ with
  \begin{equation}
    \mu''_Y = c_1\mu_{X_i} + c_2 
    ~~~ 
    \scovdoubleprime_{[Y,Y]} = c_1^2\scov{}_{[X_i,X_i]} + c_3
    ~~~
    \mathit{cov}[X_i, Y] = c_1\scov{}_{[X_i,.]}
    \label{eq:dep_one_var}
  \end{equation}
  By~\cref{thm:marginals} and~\cref{def:semantics}, we have that $\smean' = \smean''$ and $\scovprime = \scovdoubleprime$.
  Thus, $[X_1, X_2, \ldots, Y]^\intercal \sim \gauss{\smean'}{\scovprime}$ and $Y \mid X_i \sim \gauss{c_1X_i + c_2}{c_3}$.
\end{proof}

\begin{lemma}[Binary operations with values]
  Let $[X_1, X_2, \ldots]^\intercal \sim \gauss{\smean}{\scov{}}$ and $\oplus \in \{+,-,*,/\}$. 
  For all states $\s = \langle \smean, \scov{}, \sstore \rangle$, if $\langle Y = X_i \, \oplus \, e, \s \rangle \rangle \to_s \langle \smean', \scovprime, \sstore \rangle$ and $\langle \s, e \rangle \to_e c$ and $[X_1, X_2, \ldots, X_i \oplus c]^\intercal \sim \gauss{\smean''}{\scovdoubleprime}$, then\\$[X_1, X_2, \ldots, Y]^\intercal \sim \gauss{\smean'}{\scovprime}$ and $\gauss{\smean'}{\scovprime} = \gauss{\smean''}{\scovdoubleprime}$.
\end{lemma}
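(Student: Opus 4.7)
The plan mirrors the strategy used for \cref{lemma:sum_rvs}: cast each case of $\oplus$ as a concrete affine transformation on $\vec X = [X_1,X_2,\ldots]^\intercal$, then invoke \cref{thm:affine} to read off the Gaussian parameters of the extended vector, and finally compare the result against the update prescribed by the semantics rules \ruleopplusminus\ and \ruleopmuldiv. Because the rules themselves split on whether $\oplus \in \{+,-\}$ or $\oplus \in \{*,/\}$, the proof will split the same way.

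First I would handle $\oplus \in \{+,-\}$. Define the $(n+1)\times n$ projection matrix $\matr A$ by $\matr A_{k} = I_n$ for $k=1..n$ and $\matr A_{n+1}[i]=1$, $\matr A_{n+1}[j]=0$ for $j\neq i$, together with the column vector $\vec b \in \Reals^{(n+1)\times 1}$ whose only non-zero entry is $\vec b_{n+1} = \oplus c$ (i.e.\ $+c$ or $-c$). Then $[X_1,\ldots,X_n, X_i\oplus c]^\intercal = \matr A \vec X + \vec b$, and \cref{thm:affine} gives
\begin{equation*}
  \matr A \vec \mu + \vec b = \begin{bmatrix} \vec \mu \\ \mu_{X_i}\oplus c \end{bmatrix},
  \qquad
  \matr A \matr \Sigma \matr A^\intercal = \begin{bmatrix} \matr \Sigma & \matr \Sigma_{[.,X_i]} \\ \matr \Sigma_{[X_i,.]} & \matr \Sigma_{[X_i,X_i]} \end{bmatrix} .
\end{equation*}
This is exactly the pair $(\smean', \scovprime)$ produced by \ruleopplusminus\ on the state (after renaming $X_i$ to $Y$ in the update), so $\gauss{\smean'}{\scovprime} = \gauss{\smean''}{\scovdoubleprime}$ and the extended vector is Gaussian as required.

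Next I would handle $\oplus \in \{*,/\}$. Here take $\vec b = \vec 0$ and let $\matr A$ again agree with $I_n$ on the first $n$ rows, but with the last row carrying $c$ (for multiplication) or $1/c$ (for division, which requires the side condition $c\neq 0$ on the evaluated expression) in position $i$ and zeros elsewhere. Writing $c' = c$ or $c' = 1/c$ accordingly, \cref{thm:affine} yields $\mu'_{n+1} = c' \mu_{X_i} = \mu_{X_i}\oplus c$, $\scovprime_{[.,n+1]} = c'\, \scov{}_{[.,X_i]}$ and $\scovprime_{[n+1,n+1]} = c'^{\,2}\, \scov{}_{[X_i,X_i]}$, which is precisely what \ruleopmuldiv\ writes into the state (using the shorthand $c\oplus \cdot$ that the rule uses to distinguish multiplication from division). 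Matching this against $(\smean',\scovprime)$ from the semantics closes the case.

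The main obstacle I foresee is purely bookkeeping rather than mathematical: the semantics rule \ruleopmuldiv\ overloads $c \oplus \matr \Sigma_{[.,Y]}$ to mean $c\cdot \matr \Sigma_{[.,Y]}$ or $\matr \Sigma_{[.,Y]}/c$, and likewise $c^2 \oplus \matr \Sigma_{[Y,Y]}$ for the variance entry, so care is needed to show that these coincide with $c'\, \matr \Sigma_{[.,X_i]}$ and $c'^{\,2}\, \matr \Sigma_{[X_i,X_i]}$ for the two instantiations of $c'$; the division case additionally needs a remark that well-formed programs only evaluate $e$ to a non-zero $c$ when $\oplus = /$, otherwise the affine map is undefined. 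Once this is done, a final appeal to \cref{thm:marginals} confirms that the marginal indexed by the new coordinate is distributed as $X_i\oplus c$, i.e.\ as $Y$, completing the proof.
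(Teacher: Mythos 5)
Your proposal matches the paper's proof essentially step for step: the paper also splits on $\oplus \in \{+,-\}$ versus $\oplus \in \{*,/\}$, constructs the same $(n{+}1)\times n$ projection matrix $\matr A$ (with $1$ resp.\ $c$ in position $i$ of the last row) and offset vector $\vec b$, applies \cref{thm:affine}, and matches the resulting mean vector and covariance matrix against \ruleopplusminus\ and \ruleopmuldiv, handling $-$ and $/$ by substituting $-c$ and $1/c$. Your added remarks on the $c \oplus \matr\Sigma$ notation and the non-zero side condition for division are sensible housekeeping the paper leaves implicit, but they do not change the argument.
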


\begin{proof}  
  We split the proof in the cases $\oplus \in \{+,-\}$ and $\oplus \in \{*,/\}$.
  Assume w.l.o.g. that $\gauss{\smean}{\scov{}}$ contains $n$ random variables, and let $m = n+1$.

  \paragraph{Case $\oplus \in \{+,-\}$.} 
  We show the proof for $\oplus = +$; the case $\oplus = -$ is analoguous by replacing $c$ with $-c$.
  Let $\matr{A}$ be a $m \times n$ projection matrix where $\matr{A}_n = I_n$ where $I_n$ is a $n \times n$ identity matrix, $\matr{A}_{n+1}[i] = 1$ and $\matr{A}_{n+1}[j] = 0$ for $j \not= i$.
  Let $\vect{b}[i] = e$ and $\vect{b}[j] = 0$ for $j \not= i$.
  Then, by matrix multiplication $[X_1, X_2, \ldots, X_i + c]^\intercal = \matr{A}[X_1, X_2, \ldots]^\intercal + \vect{b}$.
  By~\cref{thm:affine}, $[X_1, X_2, \ldots, X_i + c]^\intercal \sim \gauss{\matr{A}\vect{\mu} + \vect{b}}{\matr{A}\scov{}\matr{A}^\intercal}$.
  By matrix multiplication
  \begin{equation}
    \matr{A}\vect{\mu} + \vect{b} = 
    \begin{bmatrix}
      \vect{\mu} \\
      \mu_{X_i} + c
    \end{bmatrix}
    ~~~
    \matr{A}\scov{}\matr{A}^\intercal = 
    \begin{bmatrix}
      \scov{} & \scov{}_{[.,X_i]} \\
      \scov{}_{[X_i,.]} & \scov{}_{[X_i,X_i]}
    \end{bmatrix}
    \label{eq:affine_sum}
  \end{equation}
  By~\cref{def:semantics} and \cref{eq:affine_sum} we have that $\smean' = \matr{A}\vect{\mu} + \vect{b}$ and $\scovprime = \matr{A}\scov{}\matr{A}^\intercal$
  Thus, $[X_1, X_2, \ldots, Y]^\intercal \sim \gauss{\smean'}{\scovprime}$ and $\gauss{\smean'}{\scovprime} = \gauss{\matr{A}\vect{\mu} + \vect{b}}{\matr{A}\scov{}\matr{A}^\intercal}$.
  This concludes the proof for $\oplus \in \{+,-\}$.

  \paragraph{Case $\oplus \in \{*,/\}$.} 
  We show the case $\oplus = *$; the case $\oplus = /$ is analoguous by replacing $c$ with $1/c$.
  Let $\matr{A}$ be a $m \times n$ projection matrix where $\matr{A}_n = I_n$ where $I_n$ is a $n \times n$ identity matrix, $\matr{A}_{n+1}[i] = c$ and $\matr{A}_{n+1}[j] = 0$ for $j \not= i$.
  Let $\vect{b} = \vect{0}$.
  Then, by matrix multiplication $[X_1, X_2, \ldots, X_ic]^\intercal = \matr{A}\vect{X} + \vect{b}$.
  By~\cref{thm:affine}, $[X_1, X_2, \ldots, X_ic]^\intercal \sim \gauss{\matr{A}\vect{\mu} + \vect{b}}{\matr{A}\scov{}\matr{A}^\intercal}$.
  By matrix multiplication
  \begin{equation}
    \matr{A}\vect{\mu} + \vect{b} = 
    \begin{bmatrix}
      \vect{\mu} \\
      \mu_{X_i}c
    \end{bmatrix}
    ~~~
    \matr{A}\scov{}\matr{A}^\intercal = 
    \begin{bmatrix}
      \scov{} & c\scov{}_{[.,X_i]} \\
      c\scov{}_{[X_i,.]} & c^2\scov{}_{[X_i,X_i]}
    \end{bmatrix}
    \label{eq:affine_prod}
  \end{equation}
  By~\cref{def:semantics} and \cref{eq:affine_prod} we have that $\smean' = \matr{A}\vect{\mu} + \vect{b}$ and $\scovprime = \matr{A}\scov{}\matr{A}^\intercal$
  Thus, $[X_1, X_2, \ldots, Y]^\intercal \sim \gauss{\smean'}{\scovprime}$ and $\gauss{\smean'}{\scovprime} = \gauss{\matr{A}\vect{\mu} + \vect{b}}{\matr{A}\scov{}\matr{A}^\intercal}$.
  This concludes the proof for $\oplus \in \{*,/\}$, and the complete of the complete lemma.
\end{proof}

\begin{lemma}[Conditioning]
  Let $[\vect{X}^\intercal_a, Y]^\intercal \sim \gauss{\smean}{\scov{}}$.
  For all states $\s = \langle \smean, \scov{}, $ $\sstore \rangle$, if $\langle \s, e \rangle \to_e c$ and $\langle \texttt{\upshape condition}(Y, c), \s \rangle \rangle \to_s \langle \smean', \scovprime, \sstore \rangle$, then $\vect{X}' \sim \gauss{\smean'}{\scovprime}$ and $\vect{X}' = \vect{X}_a \mid Y = c$.
\end{lemma}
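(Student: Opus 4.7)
The plan is to apply \cref{thm:conditions} directly, since the semantic rule \ruleopcond\ is essentially a specialization of the general Gaussian conditioning formula to the case where the conditioned sub-vector is a single scalar random variable $Y$ placed at the end of the state's multivariate Gaussian.

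First, from $\langle \s, e \rangle \to_e c$ I would extract the real constant $c$ at which the conditioning occurs. Then, from $\langle \texttt{condition}(Y, c), \s \rangle \to_s \langle \smean', \scovprime, \sstore \rangle$ and \cref{def:semantics}, the updated mean and covariance read
\begin{equation*}
  \smean' = \smean_a + \frac{c - \mu_Y}{\scov{}_{[Y,Y]}}\, \scov{}_{[.,Y]}, \qquad
  \scovprime = \scov{}_a - \frac{1}{\scov{}_{[Y,Y]}}\, \scov{}_{[.,Y]}\, \scov{}_{[Y,.]}.
\end{equation*}

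Next, I would instantiate \cref{thm:conditions} with the sub-vector partition $\vect{X}_b := Y$ (a singleton) and $\vect{X}_a$ as in the lemma. Under this instantiation $\matr{\Sigma}_{bb} = \scov{}_{[Y,Y]}$ becomes a scalar, $\matr{\Sigma}_{ab} = \scov{}_{[.,Y]}$, and $\matr{\Sigma}_{ba} = \scov{}_{[Y,.]}$. Because the covariance matrix of the state is assumed positive definite (cf.\ \cref{sec:gaussian-background}), $\scov{}_{[Y,Y]} > 0$, so the generalized inverse $\matr{\Sigma}_{bb}^{-}$ coincides with the ordinary reciprocal $1/\scov{}_{[Y,Y]}$. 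Substituting these into the formulas of \cref{thm:conditions} reproduces exactly the $\smean'$ and $\scovprime$ computed by \ruleopcond. The conclusion of \cref{thm:conditions} then yields $\vect{X}_a \mid Y = c \sim \gauss{\smean'}{\scovprime}$, and defining $\vect{X}' \sim \gauss{\smean'}{\scovprime}$ gives $\vect{X}' = \vect{X}_a \mid Y = c$ in distribution, as required.

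I do not anticipate a genuine obstacle: the argument is essentially bookkeeping, matching the scalar-case specialization of \cref{thm:conditions} against the rule \ruleopcond. The only point deserving care is the positioning of $Y$ as the last coordinate of the joint vector; this is built into the lemma's hypothesis $[\vect{X}_a^\intercal, Y]^\intercal$, and the paragraph preceding the lemma explains that conditioning on any other variable reduces to this case by a permutation, which is a soundness-preserving affine transformation justified by \cref{thm:affine}. A small check ensures the generalized-inverse notation in \cref{thm:conditions} does not introduce unexpected degeneracies when $Y$ has strictly positive variance, which is guaranteed by the positive-definiteness assumption on $\scov{}$ maintained throughout the operational semantics.
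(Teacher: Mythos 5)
Your proposal is correct and follows essentially the same route as the paper's own proof: both instantiate the Gaussian conditioning theorem with the conditioned block taken to be the singleton $Y$, observe that the generalized inverse of the scalar $\scov{}_{[Y,Y]}$ reduces to its reciprocal under positive definiteness, and match the resulting mean and covariance against the formulas produced by the \ruleopcond\ rule via the semantics definition. The only addition is your explicit remark on permuting $Y$ to the last coordinate, which the paper handles in the surrounding prose rather than in the proof itself.
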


\begin{proof}
  By~\cref{thm:conditions}, if $[\vect{X}_a^\intercal, Y]^\intercal \sim \gauss{\smean}{\scov{}}$, then $\vect{X}_a \mid Y = c \sim \gauss{\smean''}{\scovdoubleprime}$ with
  \begin{equation}
    \smean'' = \smean_a + \frac{c-\mu_Y}{\scov{}_{[Y,Y]}}\scov{}_{[.,Y]}
    ~~~
    \scovdoubleprime = \scov{}_a - \frac{1}{\scov{}_{[Y,Y]}}\scov{}_{[.,Y]}\scov{}_{[Y,.]}
    \label{eq:cond_single_RV}
  \end{equation}
  since the generalized inverse $\scov{-}_{[Y,Y]}$ of a single random variable equals $1/\scov{}_{[Y,Y]}$.
  By~\cref{def:semantics} and~\cref{eq:cond_single_RV}, we have that $\smean' = \smean''$ and $\scovprime = \scovdoubleprime$.
  Thus, $\vect{X}' \sim \gauss{\smean'}{\scovprime}$ and $\vect{X}' = \vect{X}_a \mid Y = c$.
\end{proof}

\end{document}